\keywords{automata theory, weighted automata, alternating automata, weighted logics, tree automata}
\theoremstyle{plain} 
\definecolor{grau}{rgb}{0.47,0.47,0.47}
\definecolor{grau2}{rgb}{0.1,0.1,0.1}
\tikzset{
main edge/.style={line width=0.7pt, draw=grau2,>={Latex[width=3pt]},-to, },
main node/.style={state,fill=white,draw=grau2, line width=0.6pt, scale=0.8},
edge label/.style={midway,text=black, scale=1,draw=none}
}
\newcommand{\wafa}{\textup{WAFA}\xspace}
\newcommand{\wfta}{\textup{WFTA}\xspace}
\newcommand{\wfa}{\textup{WFA}\xspace}
\newcommand{\pola}{\textup{PA}\xspace}
\newcommand{\wafaa}{(Q,\Sigma,\delta,P_0,\tau)}
\newcommand{\wftaa}{(Q,\Gamma,\delta,\lambda)}
\newcommand{\N}{\mathbb{N}}
\newcommand{\A}{\mathcal{A}}
\newcommand{\B}{\mathcal{B}}
\newcommand{\pols}[1]{S\left[ #1 \right]}
\newcommand{\chr}[1]{\mathbb{1}_{#1}}
\newcommand{\bhv}[1]{[ \mkern-3mu [ #1 ] \mkern-3mu ]}
\newcommand{\bhvs}[1]{[ #1 ]}
\newcommand{\ser}[2]{#1 \langle \mkern-4mu \langle #2 \rangle \mkern-4mu \rangle}
\newcommand{\con}[1]{%
{\mathop{#1}\limits^{\vbox to -0.5\ex@{\kern-\tw@\ex@
   \hbox{\scriptsize $\sim$}\vss}}}}
\newcommand{\tr}{t_{\mathcal{R}}}
\newcommand{\tw}{t_w^r}
\newcommand{\tlg}{T_\Gamma}
\DeclareMathOperator{\pos}{Pos}
\DeclareMathOperator{\ra}{r}
\DeclareMathOperator{\pot}{\mathcal{P}}
\DeclareMathOperator{\rank}{Rank}
\DeclareMathOperator{\weight}{Weight}
\newcommand{\lbl}[1]{\operatorname{Label}_{#1}}
\newcommand{\oset}[2]{%
  {\mathop{#2}\limits^{\vbox to 1\ex@{\kern-\tw@\ex@
   \hbox to 14\ex@{\scriptsize $#1$}\vss}}}}
\newcommand{\cons}[1]{%
  {\mathop{#1}\limits^{\vbox to -1\ex@{\kern-\tw@\ex@
   \hbox{\scriptsize $\sim$}\vss}}}}
\def\moverlay{\mathpalette\mov@rlay}
\def\mov@rlay#1#2{\leavevmode\vtop{%
   \baselineskip\z@skip \lineskiplimit-\maxdimen
   \ialign{\hfil$\m@th#1##$\hfil\cr#2\crcr}}}
\newcommand{\charfusion}[3][\mathord]{
    #1{\ifx#1\mathop\vphantom{#2}\fi
        \mathpalette\mov@rlay{#2\cr#3}
      }
    \ifx#1\mathop\expandafter\displaylimits\fi}
\let\emptyset\varnothing
\newcommand{\overbar}[1]{\mkern 1.5mu\overline{\mkern-1.5mu#1\mkern-1.5mu}\mkern 1.5mu}
\begin{document}

\title[A Nivat theorem for weighted alternating automata]{A Nivat theorem for weighted alternating automata over commutative semirings}
\titlecomment{{\lsuper*}This is an extended version of \cite{gustavgrabolle2021}}
\author[G.~Grabolle]{Gustav Grabolle\lmcsorcid{0000-0002-9828-7657}}	
\address{Institute of Computer Science, Leipzig University, 04109 Leipzig, Germany}	
\email{grabolle@informatik.uni-leipzig.de}  
\thanks{This work was supported by Deutsche Forschungsgemeinschaft (DFG), Graduiertenkolleg 1763 (QuantLA)}	

\begin{abstract}
This paper connects the classes of weighted alternating finite automata (WAFA), weighted finite tree automata (WFTA), and polynomial automata (PA).

First, we investigate the use of trees in the run semantics for weighted alternating automata and prove that the behavior of a weighted alternating automaton can be characterized as the composition of the behavior of a weighted finite tree automaton and a specific tree homomorphism if weights are taken from a commutative semiring.

Based on this, we give a Nivat-like characterization for weighted alternating automata.
Moreover, we show that the class of series recognized by weighted alternating automata is closed under inverses of homomorphisms, but not under homomorphisms.
Additionally, we give a logical characterization of weighted alternating automata, which uses weighted MSO logic for trees.

Finally, we investigate the strong connection between weighted alternating automata and polynomial automata.
We prove: A weighted language is recognized by a weighted alternating automaton iff its reversal is recognized by a polynomial automaton.
Using the corresponding result for polynomial automata, we are able to prove that the ZERONESS problem for weighted alternating automata with weights taken from the rational numbers is decidable.
\end{abstract}

\maketitle

\section{Introduction}

Non-determinism, a situation with several possible outcomes, is usually interpreted as a choice.
This view has deep historical and philosophical roots and is adapted to automata theory in the following way:
An (existential) automaton accepts if there exists at least one successful run.
However, we may as well view a situation with several possible outcomes as an obligation: A universal automaton accepts only if all possible runs are successful.
While this notion of ``universal non-determinism'' is less prominent, without further context it is as natural as the well known (existential) non-determinism.
Allowing for the simultaneous use of existential and universal non-determinism leads to the concept of alternation, such as in alternating Turing machines \cite{alternation} or alternating automata on finite \cite{BRZOZOWSKI198019}, or infinite structures \cite{MULLER1987267}. States of an alternating finite automaton (AFA) are either existential, or universal. For an existential state at least one of the outgoing runs needs to be successful, for a universal state all outgoing runs need to be successful to make the entire run successful. It is even possible to mix both modes by assigning a propositional formula over the states to each pair of state and letter. Alternating finite automata have been known for a long time. They are more succinct than finite automata and constructions like the complement, or intersection are easy for them. Due to this, they have many uses such as a stepping stone between logics and automata \cite{DeGiacomo:2013:LTL:2540128.2540252}, or in program verification \cite{Vardi1995}.

While alternating automata recognize the same class of languages as finite automata, the situation is different in the weighted setting. A weighted finite automaton (WFA) assigns a weight to each of its transitions. The weight of a run is computed by multiplying its transition weights. Finally, the automaton assigns to each input the sum over all weights of runs corresponding to this input. By this, a weighted automaton recognizes a quantitative language i.e. a mapping from the set of words into a weight structure. Depending on the weight structure used, we may view a quantitative language as a probability distribution over the words, as a cost or yield assignment, or as the likelihood or quantity of success for each input. To simultaneously allow for a multitude of interesting weight structures, weighted automata have been studied over arbitrary semirings \cite{Droste:2009:HWA:1667106}.

To adapt alternating automata into the weighted setting, it can be observed that the existence of a run in a finite automaton becomes a sum over all runs in a weighted automaton. Analogously, the demand for all runs to be successful becomes a product over all runs. More precisely, if a weighted alternating finite automaton (WAFA) is in an additive state, it will evaluate to the sum over the values of all outgoing runs. If the weighted alternating automaton is in a multiplicative state, it will evaluate to the product over the values of all outgoing runs. And again, we are able to mix both modes, this time by assigning polynomials over the states to each pair of state and letter. Weighted alternating automata over infinite words were studied in \cite{10.1007/978-3-642-03409-1_2} and in \cite{10.1007/978-3-642-24372-1_2} over finite words. While these authors focused on very specific weight structures, a more recent approach defines weighted alternating automata over arbitrary commutative semirings \cite{KOSTOLANYI20181}.

Weighted alternating automata have the same expressive power as weighted automata if and only if the semiring used is locally finite \cite{KOSTOLANYI20181}. However, for many interesting semirings such as the rational numbers, weighted alternating automata are strictly more expressive than weighted automata. While we have a fruitful framework for weighted automata, woven by results like the Nivat theorem for weighted automata \cite{droste2013weighted}, the equivalence of weighted automata and weighted rational expressions \cite{SCHUTZENBERGER1961245} and weighted restricted MSO logic \cite{DROSTE200769}, or the decidability of equality due to minimization if weights are taken from a field \cite{SCHUTZENBERGER1961245} and many more, no such results are known for weighted alternating automata. In this paper we will extend the results on weighted alternating automata by connecting them to known formalisms and thereby establishing further characterizations of quantitative languages recognized by weighted alternating automata. From there on, we will use these connections to prove interesting properties for weighted alternating automata and vice versa to translate known results for weighted alternating automata into other settings.

After a brief recollection of basic notions and notations in Section 2, Section 3 will establish several normal forms for weighted alternating automata (Lemma \ref{1Lmm}, Lemma \ref{eqLmm}) that are used as the basis of later proofs. Section 4 includes our core result (Theorem \ref{thr:WafaIffWfta}), a characterization of weighted alternating automata by the concatenation of weighted finite tree automata (WFTA) together with certain homomorphisms.
More precisely, we consider word-to-tree homomorphisms that translate words viewed as trees into trees over some arbitrary ranked alphabet.
We can show that a quantitative language is recognized by a weighted alternating automaton if, and only if there exists word-to-tree homomorphism and a weighted tree automaton such that the evaluation of the weighted alternating automata on any given word is the same as the evaluation of the weighted tree automaton on the image of the homomorphism of this word.

In Section 5 we will use this result to prove that the class of quantitative languages recognized by weighted alternating automata is closed under inverses of homomorphisms (Corollary \ref{crl:ClIHWafa}). However, we can prove the same is not true for homomorphisms in general (Lemma \ref{lmm:NClHWafa}). Since the closure under homomorphisms plays a key part in the proof of the Nivat theorem for weighted automata this prohibits a one-to-one translation of the Nivat theorem for weighted automata into the setting of weighted alternating automata. Nonetheless, we will utilize the connection between weighted alternating automata and weighted tree automata, as well as a Nivat theorem for weighted tree automata, to prove an adequate result for weighted alternating automata (Theorem \ref{thr:nivatWAFA}). This will lead us directly into a logical characterization of quantitative languages recognized by weighted alternating automata with the help of weighted restricted MSO logic for weighted tree automata (Section 6 Theorem \ref{thr:WafaIffsrMSO}). It is well known that recognizable tree languages are closed under inverses of tree homomorphisms. However, the same does not hold in the weighted setting for arbitrary commutative semirings. Section 7 gives a precise characterization of the class of semirings for which the respective class of recognizable weighted tree languages is closed under inverses of homomorphisms (Theorem \ref{thr:clOTAUIHom}). For this purpose, we will use our core theorem, and a result form \cite{KOSTOLANYI20181}.

Lastly, in Section 8, we investigate the connection between weighted alternating automata and recently introduced polynomial automata \cite{8005101} to prove the decidability of the ZERONESS and EQUALITY problems for weighted alternating automata if weights are taken from the rational numbers (Corollary \ref{crl:DecZerEqWafa}).

\section{Preliminaries}
Let $\N=\{0,1,2,\ldots\}$ denote the set of non-negative integers. For sets $M, N$ we denote the cardinality of $M$ by $|M|$, the set of subsets of $M$ by $\pot(M)$, the Cartesian product of $M$ and $N$ by $M\times N$, and the set of mappings from $M$ to $N$ by $N^M=\{f \mid f:M\rightarrow N\}$. If $M$ is finite and non-empty, it is also called an \textit{alphabet}.

For the remainder of this paper, let $\Sigma, \Gamma$ and $\Lambda$ denote alphabets. The set of all (finite) words over $\Sigma$ is denoted by $\Sigma^*$. Let $|w|$ denote the length of a word $w$ and $\Sigma^k = \{w\in \Sigma^*\mid |w|=k\}$. The unique word in $\Sigma^0$ is called the \emph{empty word} and denoted by $\varepsilon$. The concatenation of words $u,v$ is denoted by $u\cdot v$ or just $uv$. A mapping $h:\Lambda^*\rightarrow \Sigma^*$ is called a \textit{homomorphism} if $h(u\cdot v)=h(u)\cdot h(v)$ and $h$ is \textit{non-deleting} if $h(a)\neq\varepsilon$ for all $a \in \Lambda$.

A \textit{monoid} is an algebraic structure $(M, \cdot, 1)$, where $\cdot$ is a binary associative internal operation and $m \cdot 1 = m = 1\cdot m$ for all $m \in M$. A monoid is \textit{commutative} if $\cdot$ is commutative.

A \textit{semiring} is an algebraic structure $(S,+,\cdot,0,1)$, where $(S,+,0)$ is a commutative monoid, $(S,\cdot,1)$ is a monoid, $s\cdot 0 = 0 = 0\cdot s$ for all $s \in S$, and $s_3 \cdot (s_1+s_2) = s_3\cdot s_1 + s_3\cdot s_2$ and $(s_1+s_2)\cdot s_3 = s_1\cdot s_3 + s_2\cdot s_3$ for all $s_1,s_2,s_3 \in S$. A semiring is \textit{commutative} if $(S,\cdot,1)$ is commutative.

\vspace{5pt}
\centerline{\emph{For the remainder of this paper, let $S$ denote a commutative semiring.}}
\vspace{5pt}

For any set $M$, we denote $S^{M}$ by $\ser{S}{M}$. Furthermore, for $L\subseteq M$ we define the \textit{characteristic function} $\chr{L}\in \ser{S}{M}$ by $\chr{L}(w)=1$ if $w\in L$ and $\chr{L}(w)=0$ otherwise for all $w\in M$. An element $s \in \ser{S}{\Sigma^*}$ is called a \textit{$S$-weighted $\Sigma$-language}  (for short: weighted language). 

Let $X_n$ always denote a linearly ordered set with $|X_n|=n\in \N$, we refer to the $i$-th element of $X_n$ by $x_i$. Let $\pols{X_n}$ denote the \textit{semiring of polynomials} with coefficients in $S$ and commuting indeterminates $x_1,\ldots, x_n$. We say $m \in \pols{X_n}$ is a \textit{monomial} if $m=s\cdot x_1^{k_1}\cdot \ldots \cdot x_n^{k_n}$ for some $s \in S$ and $k_1,\ldots, k_n \in \N$. The \textit{degree} of $m$ is $\sum_{i=1}^{n} k_i$. A monomial $m$ is a \textit{proper monomial} if it has a non-zero degree. Each polynomial that can be written as a sum of proper monomials is called a \emph{polynomial without constants} and the set of all polynomials without constants is denoted by ${\pols{X_n}}_{\text{const}= 0}$. For $p,p_1,\ldots,p_n \in \pols{X_n}$ let $p\langle p_1,\ldots,p_n\rangle$ denote the polynomial gained from the simultaneous substitution of $x_i$ by $p_i$ in $p$ for all $1\leq i \leq n$.

Next, we give a concise collection of definitions from the topic of ranked trees that are needed in this paper.
For a more detailed introduction to trees and tree automata, we recommend \cite{tata2022}.

A \textit{ranked alphabet} is an ordered pair $(\Gamma, \rank)$, where $\rank:\Gamma\rightarrow \N$ is a mapping. Without loss of generality, we assume $X_n \cap \Gamma=\emptyset$. Moreover, let $\Gamma^{(r)}=\{\gamma\in \Gamma\mid \rank(\gamma)=r\}$ and $\rank(\Gamma)=\max\{\rank(\gamma)\mid \gamma \in \Gamma\}$. 

The set of \textit{$\Gamma$-terms over $X_n$} is the smallest set $T_{\Gamma}[X_n]$ such that $\Gamma^{(0)} \cup X_n\subseteq T_{\Gamma}[X_n]$; and $g(t_1,\ldots,t_{\rank(g)})\in T_{\Gamma}[X_n]$ for all $g \in \Gamma$ and all $t_1,\ldots,t_{\rank(g)}\in T_{\Gamma}[X_n]$. We denote $T_\Gamma (X_0)=T_\Gamma (\emptyset)$ by $T_\Gamma$. We extend $\rank$ by putting $\rank(x_i)=0$ for all $i\in\N$. If $\rank$ is clear from the context, we just write $g(t_1,\ldots, t_k)$. Moreover, we identify $g$ and $g()$ for $g\in \Gamma^{(0)} \cup X_n$. Hence, all terms $t\in T_\Gamma [X_n]$ are of the form $t=g(t_1,\ldots,t_k)$ for some $g\in \Gamma \cup X_n$ and $t_1,\ldots,t_k \in T_{\Gamma}[X_n]$.

We define $\pos:T_{\Gamma}[X_n]\rightarrow \pot(\N^*): g(t_1,\ldots,t_k)\mapsto \{\varepsilon\} \cup \bigcup_{i=1}^{k} \{i\}\cdot \pos(t_i)$. Let $t=g(t_1,\ldots,t_k)$. The mapping $\lbl{t}:\pos(t)\rightarrow\Gamma\cup X_n$ is defined by $\lbl{t}(\varepsilon)=g$; and $\lbl{t}(w)=\lbl{t_i}(v)$ if $w=iv\in \pos(t)$. We will identify $t$ and the mapping $\lbl{t}$: We write $t(w)$ to denote $\lbl{t}(w)$ and refer to terms as trees. Consequently, we have $t^{-1}(g) = \{w\in \pos(t) \mid \lbl{t}(w)=g\}$ for all $g\in T_{\Gamma}[X_n]$.
This coincides with the definition of a tree as a directed graph in the following way:
The set of vertices is $\pos(t)$, the root is $\varepsilon$, and for $u,v \in \pos(t)$ we have a $(u,v)$-edge iff $v=ui$ for some $i\in \N$.

For $t=g(t_1,\ldots,t_k),t'\in T_\Gamma [X_n]$, and $w\in \pos(t)$, the \textit{subtree of $t$ at $w$}, denoted by $t|_w$ and the \textit{substitution of $t'$ in $t$ at $w$}, denoted by $t\langle w\leftarrow t'\rangle$ are defined by $t|_\varepsilon =t$ and $t\langle \varepsilon\leftarrow t'\rangle=t'$ if $w=\varepsilon$; and $t|_w=t_i|_v$ and $t\langle w\leftarrow t'\rangle=g(t_1,\ldots, t_{i-1},t_i\langle v\leftarrow t'\rangle, t_{i+1},\ldots, t_k)$ for $w=iv\in \pos(t)$. Moreover, let $M\subseteq \pos(t)$ and $|M|=l$. We define $t\langle M\leftarrow (t'_1,\ldots,t'_l)\rangle=t\langle m_l \leftarrow t'_l\rangle \cdots \langle m_1\leftarrow t'_1\rangle$, where $m_i$ is the $i$-th element of $M$ with regard to the lexicographical order on $\N^*$. In case $t'_1=\ldots=t'_l=t'$, we abbreviate $t\langle M\leftarrow(t'_1,\ldots,t'_l)\rangle$ by $t\langle M\leftarrow t'\rangle$. If $M=t^{-1}(x_i)$, we write $t\langle x_i\leftarrow (t'_1,\ldots, t'_l)\rangle$ to denote $t\langle M\leftarrow(t'_1,\ldots, t'_l)\rangle$. Finally, let $t\langle t'_1,\ldots,t'_n \rangle=t\langle t^{-1}(x_1)\leftarrow t'_1\rangle\cdots\langle t^{-1}(x_n)\leftarrow t'_n\rangle$ denote the simultaneous substitution in trees.

We say a tree $t$ is \textit{non-deleting in $l$ variables} if it contains at least one symbol from $\Gamma$ and each of the variables $x_1,\ldots, x_l$ occurs at least once in $t$.
We say $t$ is \textit{linear in $l$ variables} if it is non-deleting in $l$ variables and each of the variables occurs at most once in $t$.
Moreover, let $\ra(t)=\sum_{i=1}^n|t^{-1}(x_i)|$ denote the number of nodes of $t$ labeled by variables. For any $r\in \N$ let $T_{\Gamma}^{(r)}(X_n)=\{t\in T_\Gamma(X_n)\mid \ra(t)=r\}$ be the set of $\Gamma$-trees with exactly $r$ positions labeled by variables.

A \textit{tree homomorphism} $h:\tlg\rightarrow T_\Lambda$ is a mapping such that for all $g \in \Gamma^{(r)}$ there exists $t_g \in T_{\Lambda}[X_r]$ with $h(g(t_1,\ldots,t_r))=t_g\langle h(t_1),\ldots, h(t_r)\rangle$ for all $t_1,\ldots,t_r \in T_\Gamma$. We will denote $t_g$ by $h(g)$, even though $t_g$ is not necessarily in $T_\Lambda$. A tree homomorphism is \textit{non-deleting} (resp. \textit{linear}) if each $h(g)$ is non-deleting (resp. linear) in $\rank(g)$ variables.
For more information on tree homomorphisms consider Paragraph 1.4 in \cite{tata2022}.

\section{Weighted alternating finite automata}

This section introduces weighted alternating finite automata (\wafa) and shows how to achieve desirable normal forms of \wafa (Lemma \ref{1Lmm}, Lemma \ref{eqLmm}). We will follow the definitions of \cite{KOSTOLANYI20181}.

A \textit{weighted alternating finite automaton} (\wafa) is a 5-tuple $\A=\wafaa$, where $Q=\{q_1,\ldots ,q_n\}$ is a finite set of states, $\Sigma$ is an alphabet, $\delta: Q \times \Sigma \rightarrow \pols{Q}$ is a transition function, $P_0 \in \pols{Q}$ an initial polynomial, and $\tau: Q \rightarrow S$ a final weight function.

Let $\A=\wafaa$ be a \wafa. Its \textit{state behavior} $\bhvs{\A}: Q\times\Sigma^*\rightarrow S$ is the mapping defined by \[\bhvs{A}(q,w) = \begin{cases} \tau(q) &\text{if } w=\varepsilon , \\ \delta(q,a)\big\langle\bhvs{\A}(q_1,v),\ldots, \bhvs{\A}(q_n,v)\big\rangle &\text{if } w=av \text{ for } a\in \Sigma\enspace. \end{cases}\]
Usually, we will write $\bhvs{\A}_q(w)$ instead of $\bhvs{\A}(q,w)$. 
Now, the \textit{behavior} of $\A$ is the weighted language $\bhv{\A}:\Sigma^*\rightarrow S$ defined by \[\bhv{A}(w)= P_0\big\langle\bhvs{\A}_{q_1}(w),\ldots,\bhvs{\A}_{q_n}(w)\big\rangle\enspace.\]
A weighted language $s$ is \textit{recognized} by $\A$ if and only if $\bhv{\A}=s$. Two \wafa are said to be \textit{equivalent} if they recognize the same weighted language.

Within the scope of this paper, we define a weighted finite automaton as follows: 
A \emph{weighted finite automaton} (\wfa) is a \wafa $\A = (\{q_1, \ldots, q_n\}, \Sigma, \delta, P_0, \tau)$ where $P_0 = \sum_{j=1}^{n} s_j\cdot q_j$ and $\delta(q_i,a)= \sum_{j=1}^{n} s^{a}_{ij}\cdot q_j$ for all $1\leq i \leq n$, $a \in \Sigma$.
This definition coincides with the well known definition cf. \cite{HWAc5},\cite{DROSTE200769}. For example $P_0 = q_1 + 3 \cdot q_4$ corresponds to initial weight $1$ in $q_1$, initial weight $3$ in $q_4$, and initial weight $0$ in all other states;
or $\delta(q_2, a) = 3 \cdot q_2 + 4 \cdot q_3$ corresponds to a situation where state $q_2$ has an $a$-loop with weight $3$ and an $a$-transition to state $q_3$ with weight $4$ (and vice versa). Lastly, $\tau(q_2) = 2$ means state $q_2$ has final weight $2$.

Let $\A$ be a \wafa with weights taken from $S$.
We define $M_{(q,a)}$ as the \emph{set of monomials that appear in $\delta(q,a)$} and $M_0$ as the \emph{set of monomials that appear in $P_0$}.
An element $s \in S$ is called a \emph{coefficient in $\A$} if $s$ is the coefficient of a monomial in $M_0$, or the coefficient of a monomial in $M_{(q,a)}$ for some $q \in Q, a \in \Sigma$.
Similarly, $s$ is called a \emph{constant in $\A$} if $P_0(0, \ldots, 0) = s$, or $\delta(q,a)(0, \ldots, 0) = s$ for some $q\in Q, a \in \Sigma$. 

We say $\A$ is \textit{nice} if it has the following properties:
\begin{enumerate}[(i)]
\item $\delta(q,a)$ is a finite sum of pairwise distinct monomials of the form $s\cdot q_1^{k_1}\cdot\ldots \cdot q_n^{k_n}$ for all $q\in Q, a\in \Sigma$,
\item all monomials in $P_0$ and $\delta$ are proper,
\item $P_0 =q_1$.
\end{enumerate}
Moreover, we say that $\A$ is \textit{purely polynomial} if: 
\begin{enumerate}
  \item[(iv)] all monomials (in $P_0$ and $\delta$) have coefficient $1$.
\end{enumerate}

We want to show that we can always assume a \wafa to be nice and purely polynomial.

\begin{lem}\label{1Lmm} For each \wafa $\A$ there exists an equivalent \wafa $\A'$ such that (i)-(iv) hold for $\A'$.
The construction of $\A'$ is effective.
\end{lem}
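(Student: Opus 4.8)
The plan is to establish (i)--(iv) one after another, each time passing to an equivalent \wafa obtained by adding a handful of fresh states. Property~(i) is a mere normalisation: every element of $\pols{Q}$ can be written as a finite sum of pairwise distinct monomials $s\cdot q_1^{k_1}\cdots q_n^{k_n}$ with nonzero coefficients by collecting like terms, so we simply fix this canonical form for $P_0$ and for every $\delta(q,a)$, and silently re-apply it after each construction below. The algebraic engine for the remaining properties is that, since $S$ is commutative, so is the polynomial semiring $\pols{Q}$, and hence for any $b_1,\dots,b_n\in S$ the evaluation map $\pols{Q}\to S$, $p\mapsto p\langle b_1,\dots,b_n\rangle$, is a semiring homomorphism. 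This yields the substitution--composition identity $\bigl(p\langle p_1,\dots,p_n\rangle\bigr)\langle b_1,\dots,b_n\rangle=p\langle p_1\langle b_1,\dots,b_n\rangle,\dots,p_n\langle b_1,\dots,b_n\rangle\rangle$ for $p_1,\dots,p_n\in\pols{Q}$, which is exactly what is needed to commute substitutions with the recursive clause defining $\bhvs{\A}$; in each construction the claimed equivalence is then a routine induction on $|w|$.

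For (iii): add a fresh state $q_0$, set $P_0':=q_0$, $\tau'(q_0):=P_0\langle\tau(q_1),\dots,\tau(q_n)\rangle$ and $\delta'(q_0,a):=P_0\langle\delta(q_1,a),\dots,\delta(q_n,a)\rangle$ for all $a\in\Sigma$ (note this polynomial does not mention $q_0$), and leave $\tau,\delta$ unchanged on the old states. An induction on $|w|$ shows $\bhvs{\A'}_q(w)=\bhvs{\A}_q(w)$ for every old state $q$, and then, using the identity above, $\bhvs{\A'}_{q_0}(w)=\bhv{\A}(w)$; consequently $\bhv{\A'}(w)=q_0\langle\bhvs{\A'}_{q_0}(w),\dots\rangle=\bhvs{\A'}_{q_0}(w)=\bhv{\A}(w)$. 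Renaming the states so that $q_0$ is listed first, we obtain an equivalent \wafa with $P_0=q_1$; this construction obviously preserves (i).

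For (ii) and (iv): for each of the finitely many weights $s$ occurring as a coefficient in $P_0$ or in some $\delta(q,a)$, add a fresh state $p_s$ with $\tau(p_s):=s$ and $\delta(p_s,a):=p_s$ for all $a\in\Sigma$; a one-line induction gives $\bhvs{\A}_{p_s}(w)=s$ for every $w$. Now rewrite $P_0$ and every $\delta(q,a)$ monomial by monomial: replace a constant monomial $s$ by the degree-one monomial $p_s$, and replace a non-constant monomial $s\cdot q_1^{k_1}\cdots q_n^{k_n}$ with $s\neq1$ by $p_s\cdot q_1^{k_1}\cdots q_n^{k_n}$, keeping non-constant monomials that already have coefficient $1$ as they are (coefficient-$0$ monomials do not occur in the canonical form). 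Since $\bhvs{\A}_{p_s}(w)=s$ for every $w$, evaluating the rewritten polynomials yields the same values, so the new \wafa is equivalent; every monomial is now non-constant with coefficient $1$, and the new transitions $\delta(p_s,a)=p_s$ already satisfy (i), (ii) and (iv).

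Finally, the order matters: one must perform the construction for (iii) first and only afterwards the one for (ii)/(iv). Indeed $\delta'(q_0,a)=P_0\langle\delta(q_1,a),\dots,\delta(q_n,a)\rangle$ is a product of sums of monomials, and expanding it and collecting like terms can introduce both constant monomials and coefficients different from $1$, so a \wafa already satisfying (ii) and (iv) need not retain these properties after the (iii)-construction; by contrast the (ii)/(iv)-construction leaves $P_0=q_1$ and the already normalised transitions untouched, so it preserves (iii) (and trivially (i)). There is no deep obstacle here: the real work is keeping track of which properties each step preserves and dispatching the easy edge cases --- the zero polynomial, zero and unit coefficients, and the $w=\varepsilon$ base case of each induction.
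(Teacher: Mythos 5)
Your proof is correct, and the constructions for (i), (ii) and (iv) coincide with the paper's: canonical form of polynomials via commutativity and distributivity, and fresh ``deadlock'' states with self-loops $\delta(p_s,a)=p_s$ and $\tau(p_s)=s$ that carry constants resp.\ coefficients. The genuine difference is property (iii): the paper simply invokes Lemma~6.3 of \cite{KOSTOLANYI20181} at that point, whereas you give a self-contained construction (a fresh initial state $q_0$ with $\tau'(q_0)=P_0\langle\tau(q_1),\dots,\tau(q_n)\rangle$ and $\delta'(q_0,a)=P_0\langle\delta(q_1,a),\dots,\delta(q_n,a)\rangle$), verified via the substitution--composition identity; this makes the lemma independent of the cited result, at the cost of a slightly longer argument. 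You also merge (ii) and (iv) into a single step and reorder it after (iii), which is legitimate, and your observation that the order matters is essentially right: the $q_0$-construction can indeed destroy (iv), since expanding $P_0\langle\delta(q_1,a),\dots,\delta(q_n,a)\rangle$ and collecting like terms produces coefficients other than $1$. One small inaccuracy: it cannot destroy (ii) --- substituting polynomials without constant monomials into a $P_0$ without constant monomials again yields a polynomial with no constant term --- so only (iv) is at risk; this is a remark about motivation, not a gap, and your chosen order makes the whole argument go through. The paper's order ((i), (ii), (iii) by citation, (iv) last) works for the same reason: the coefficient-normalisation step is performed last, and nothing after it can reintroduce constants or non-unit coefficients.
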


\begin{proof}
Let $\A=\wafaa$ be a \wafa.
\begin{enumerate}[(i)]
\item Since $\cdot$ is distributive and commutative in $\pols{Q}$ there exists an equivalent \wafa $\A'$ such that (i) holds. 

\item Assume (i) holds for $\A=\wafaa$. We define a \wafa $\A'=(Q',\Sigma, \delta', P_0', \tau')$ which includes a new state $q_c$ for each constant $c$ in $\A$. Furthermore, $\delta'$ and $P_0'$ are as $\delta$ and $P_0$, respectively, but each occurrence of each constant $c$ is replaced by $q_c$. Moreover, $\delta'(q_c,a)=q_c$ for all $a\in \Sigma$ and $\tau'(q_c)=c$. There is a finite number of constants in $\A$. Thus, $\A'$ is a \wafa.  It is easy to see that $\A$ and $\A'$ are equivalent and that (i)-(ii) hold for $\A'$.

\item Assume (i)-(ii) hold for $\A$. Due to Lemma 6.3 of \cite{KOSTOLANYI20181}, there exists an equivalent \wafa $\A'$ such that (i)-(iii) hold for $\A'$.
The construction of $\A'$ is straightforward: we add a new state $q$ to $\A$ with $\delta(q, a) = P_0\langle \delta(q_1,a), \ldots, \delta(q_n, a)\rangle$ and $\tau(q) = P_0\langle t(q_1), \ldots, \tau(q_n)\rangle$.
We rename the states of $\A$ such that $q$ becomes the new $q_1$.
Lastly, we set $P_0 = q_1$.

\item Assume that (i)-(iii) hold for $\A$. We define \[Q'=Q \cup \{q_s \mid s \text{ is a coefficient in }\A\}.\] We may assume that the two sets forming this union are disjoint. Furthermore, let $\delta'$  be defined by
\[
\delta'(q,a) = \begin{cases}\sum\limits_{s\cdot{q_1}^{k_1} \ldots {q_n}^{k_n} \in M_{(q,a)}} {q_1}^{k_1} \ldots {q_n}^{k_n} \cdot q_s &\text{if } q \in Q \text{ and}\\ q &\text{otherwise} \end{cases}\] for all $q \in Q'$, $a \in \Sigma$. Moreover, let $\tau'$  be defined by
\[
\tau'(q) = \begin{cases}\tau(q) &\text{if } q \in Q \text{ and}\\ s &\text{if } q=q_s\end{cases}
\] for all $q \in Q'$. Consider the \wafa $\A'=(Q',\Sigma,P_0,\delta',\tau')$. It is easy to see that $\bhv{\A}=\bhv{\A'}$ and that every monomial in $\A'$ has $1$ as coefficient. 

If the appropriate order on $Q'$ is chosen, properties (i)-(iii) hold for $\A'$, too.\qedhere
\end{enumerate} 
\end{proof}

We would like to point out: while the construction described in the proof of Property (iv) also works for \wfa the resulting automaton does not have to be a \wfa.

In \cite{KOSTOLANYI20181} the transition function and the initial polynomial are not allowed to contain constants. This corresponds to the property that runs are not allowed to terminate before the entire word is read. Since it will help with several constructions, we allowed constants in our definition. Nevertheless, as Lemma \ref{1Lmm} (ii) shows, the introduction of constants does not increase expressiveness since it is possible to simulate terminating transitions by ``deadlock''-states.

We say a \wafa $\wafaa$ is \textit{equalized} if all monomials occurring in $\delta$ have the same degree.

\begin{lem}\label{eqLmm} For each \wafa $\A$ there exists an equivalent, equalized and nice\wafa $\A'$.
  The construction of $\A'$ is effective.
\end{lem}

\begin{proof} Let $\A=(Q,\Sigma,\delta,q_1,\tau)$ be a nice \wafa and $d$ the maximum degree of monomials occurring in $\delta$. Let $q_{n+1}$ be a new state and 
\[
\delta'(q,a) = \begin{cases}\sum\limits_{s{q_1}^{k_1} \ldots {q_n}^{k_n} \in M_{(q,a)}} s\cdot {q_1}^{k_1} \ldots {q_n}^{k_n} \cdot q_{n+1}^{d-\sum_{u=1}^n k_u} &\text{if } q \in Q \text{ and}\\ q_{n+1}^d &\text{otherwise} \end{cases}\] for all $q \in Q'$, $a \in \Sigma$. As well as, \[\tau'(q) = \begin{cases}\tau(q) &\text{if } q \in Q \text{ and}\\ 1 &\text{if } q=q_{n+1}\end{cases}
\] for all $q \in Q'$. Clearly, $\A'=(Q\cup\{q_{n+1}\},\Sigma,\delta',q_1,\tau')$ is equalized. Also, $\A$ and $\A'$ are equivalent and $\A'$ is nice.

Due to the form of the constructed polynomials and since the initial polynomial remains unchanged, we may assume that $\A'$ is nice.
Moreover, we didn't change the coefficients, thus $\A'$ is purely polynomial, if $\A$ was purely polynomial.
\end{proof}

Nice \wafa can be represented in the following way: As usual we depict each state by a circle. Then, each monomial $s\cdot q_1^{k_1}\ldots q_n^{k_n}$ in $\delta(q_i,a)$ is represented by a multi-arrow which is labeled by $a:s$, begins in $q_i$, and has $k_j$ heads in $q_j$ for all $1\leq j\leq n$, respectively. In case a multi-arrow has more than one head, we join these heads by a \tikz{\draw[draw= grau2, fill=grau2] (0,0) circle (1.5pt);}. If $s=1$, we omit the $s$-label. If $s=0$, we omit the complete multi-arrow. The initial polynomial is represented analogously. The final weights are represented as usual. Note that the multi-arrows can be viewed as a parallel or simultaneous transitions and that this representation coincides with the usual representation if the automaton is a $\wfa$. Consider the following example:

\begin{exa}\label{xmp:serp} Let $S=(\N,+,\cdot,0,1)$, $\Sigma=\{a,b\}$, and $s$ the weighted language \[\begin{array}{llcl} s: &\Sigma^*&\rightarrow &S:\\
&w &\mapsto &\begin{cases}{(2^{j})}^{2^i} &\text{if } w=a^ib^j \enspace, \\ 0 &\text{otherwise.}\end{cases}\end{array} \]
We consider the \wafa $\A=(\{q,p\},\Sigma,P_0,\delta,\tau)$, defined by:
\[\begin{array}{lcllclclcllcl}
P_0 &=&q & & & &\hspace{2cm} &\delta(q,a)&=&{q}^2 &\delta(q,b)&= &p\\
\tau(q)&=&1 &\tau(p)&=&2 & &\delta(p,a)&=&0 &\delta(p,b)&=&2\cdot p
\end{array}\]
A depiction of this automaton can be seen in Figure \ref{rep1}. One can check that $\bhv{\A}=s$, for example:
\[\begin{array}{clclcl}
& \bhv{\A}(aabb) & = & q\big\langle\bhvs{\A}_q(aabb),\bhvs{\A}_p(aabb)\big\rangle & & \\
= & \bhvs{\A}_q(aabb) & = & q^2\big\langle\bhvs{\A}_q(aabb),\bhvs{\A}_p(aabb)\big\rangle & & \\
= & \big(\bhvs{\A}_q(abb)\big)^2 & = & \big(\bhvs{\A}_q(bb)\big)^{2\cdot 2} & = & \big(\bhvs{\A}_p(b)\big)^{2\cdot 2}\\
= & \big(2\cdot\bhvs{\A}_p(\varepsilon)\big)^{2\cdot 2} & = & \big(2\cdot\tau(p)\big)^{2\cdot 2} & = & {(2^2)}^{2^2}
\end{array}\]
\begin{figure}
\centering
\begin{minipage}{.5\textwidth}
\centering
\begin{tikzpicture}[every edge/.style={main edge},
          every loop/.style={main edge}, 
          every initial by arrow/.style={main edge, initial distance= 10pt},
          every accepting by arrow/.style={main edge, accepting distance= 10pt}]
  \def\x{1.5}
  \def\y{1.5}

  \node[main node,initial by arrow, initial where=above, initial text=,accepting by arrow, accepting where= below, accepting text=$1$ ] (q1) at (-1*\x,0*\y) {$q$};
  \node[main node,accepting by arrow, accepting where= right, accepting text=$2$ ] (q2) at (1*\x,0*\y) {$p$};
  \phantom{\node[main node,accepting by arrow, accepting where= below, accepting text=$1$ ] (q3) at (0*\x,-.8*\y) {$h_1$};}

  \path (q1) edge node[edge label, above] {$b$} (q2)
             edge[out=180, in=150, looseness=9] node[edge label, xshift=-5pt, yshift=-5pt] {$a$} (q1)
             edge[out=180, in=210, looseness=9] (q1)

        (q2) edge[out=-90, in=-60, looseness=9] node[edge label, yshift=-5pt] {$b$:$2$} (q2);
  \draw[draw= grau2, fill=grau2] ([rotate around={0:(q1)}]q1.west) circle (1.5pt);

\end{tikzpicture}
\caption{Representation of $\A$}\label{rep1}
\end{minipage}%
\begin{minipage}{.5\textwidth}
\centering
\begin{tikzpicture}[every edge/.style={main edge},
          every loop/.style={main edge}, 
          every initial by arrow/.style={main edge, initial distance= 10pt},
          every accepting by arrow/.style={main edge, accepting distance= 10pt}]
  \def\x{1.5}
  \def\y{1.5}

  \node[main node,initial by arrow, initial where=above, initial text=,accepting by arrow, accepting where= below, accepting text=$1$ ] (q1) at (-1*\x,0*\y) {$q$};
  \node[main node,accepting by arrow, accepting where= right, accepting text=$2$ ] (q2) at (1*\x,0*\y) {$p$};
  \node[main node,accepting by arrow, accepting where= below, accepting text=$1$ ] (q3) at (0*\x,-.8*\y) {$h_1$};

  \path (q1) edge node[edge label, above] {$b$} (q2)
             edge[out=180, in=150, looseness=9] node[edge label, xshift=-5pt, yshift=-5pt] {$a$} (q1)
             edge[out=180, in=210, looseness=9] (q1)
             edge[out=0, in = 120] (q3)

         (q3) edge[out=180, in=150, looseness=9] node[edge label, xshift=-5pt, yshift=-5pt] {$\Sigma$} (q3)
         edge[out=180, in=210, looseness=9] (q3)

        (q2) edge[out=-90, in=-60, looseness=9] node[edge label, yshift=-5pt] {$b$:$2$} (q2)
             edge[out=-90, in=60] (q3);
  \draw[draw= grau2, fill=grau2] ([rotate around={0:(q1)}]q1.west) circle (1.5pt);
  \draw[draw= grau2, fill=grau2] ([rotate around={0:(q1)}]q1.east) circle (1.5pt);
  \draw[draw= grau2, fill=grau2] ([rotate around={0:(q3)}]q3.west) circle (1.5pt);
  \draw[draw= grau2, fill=grau2] ([rotate around={0:(q2)}]q2.south) circle (1.5pt);

\end{tikzpicture}
\caption{Representation of equalized, nice $\A$}\label{rep2}
\end{minipage}
\end{figure}
\end{exa}

Two-mode alternating automata are a subclass of alternating automata.
In the weighted setting two-mode alternating automata are defined as follows (cf. \cite{KOSTOLANYI20181}):
each state $q$ is either existential ($\delta(q,a)=\sum_{i=1}^n s^{a}_i\cdot q_i$ for all $a\in \Sigma$) or universal ($\delta(q,a)=s^{a}\cdot q_1^{k_1}\cdot\ldots\cdot q_n^{k_n}$ for all $a\in \Sigma$). In \cite{KOSTOLANYI20181} run semantics were defined for two-mode alternating automata. However, they can be defined for alternating automata with mixed states, in an analogue fashion. Here, we give the definition of runs for nice \wafa. For one, as seen above, we can transform every \wafa into a nice one. For another, it is only a technicality to define runs for arbitrary \wafa, based on the definition below.

The idea of reading monomials as multi-transitions was already introduced above. Having this in mind, a run tree is a tree labeled by states such that: $(i)$ Our run begins in the initial state, $(ii)$ if a vertex at depth $k$ is $q$ labeled, then the labels of its children fit the states of an $a_k$ labeled multi-transition beginning in state $q$, and $(iii)$ our run halts after $n$ steps.

More formally, if $\A=(Q,\Sigma,\delta,q_1,\tau)$ is a nice \wafa, we define the ranked alphabet $\Gamma_\A=\{(q,x)\in Q \times \pols{Q}\mid x=\tau(q) \text{ or } x \in M_{(q,a)} \text{ for some } a \in \Sigma\}$ where $\rank(q,x)$ is the rank of $x$ as a polynomial. A run over $w=a_1\ldots a_n$ in $\A$ is a $\Gamma_\A$-tree $\tr$ with the following properties:
\begin{enumerate}[(i)]
  \item If $|p|=0$, then $\tr(p)=(q_1,x)$, for some $x$.
  \item If $|p|<n$ and $\tr(p)=(q,s\cdot q_1^{k_1}\cdot\ldots\cdot q_n^{k_n})$, then $s\cdot q_1^{k_1}\cdot\ldots\cdot q_n^{k_n}\in M_{(q,a_{|p|+1})}$ and for all $\sum_{u=1}^{l-1}k_u \leq i < \sum_{u=1}^{l}k_u$ we have $\tr(pi)=(q_l,x)$ for some $x$.
  \item If $|p|= n$, then $\rank(\tr(p))=0$.
\end{enumerate}

The weight of a run tree $\tr$ is defined by \[\weight(\tr)=\hspace{40pt}\prod_{\mathclap{(q,s\cdot q_1^{k_1}\cdot\ldots\cdot q_n^{k_n})\in \tr(\pos(\tr))}} s \enspace.\] Note that the final weights are accounted for, since they are the labels of the leaves of our run tree.

And as usual we have (Theorem 5.15 \& Theorem 5.17 in \cite{KOSTOLANYI20181}): \[\bhv{\A}(w)=\sum_{\tr \text{ run over } w}\weight(\tr)\enspace.\]

\begin{exa} Later we will use the connection between \wafa and trees heavily. For a better understanding of this connection we want to point out the following:

First off, let us observe that $|M_{(q,a)}|=1$ if $\delta(q,a)$ is a monomial. Consequently, if $\delta(q,a)$ is a monomial for all $q\in Q, a\in \Sigma$, then every word has a unique run. We call a \wafa with this property \textit{universal}.
A two-mode \wafa where every state is universal is universal and vice versa.
Universality for \wafa is as determinism for finite automata. However, in general universal \wafa and \wfa are incomparable it terms of expressive power (cf. Corollary 3 in \cite{10.1007/978-3-642-24372-1_2}).

Second, we consider the non-universal automaton from Figure \ref{fig:nonDetWAFA}. This automaton is not universal since $\delta(q,a)= q + p$. As a consequence, we may have several runs, whenever the letter $a$ is to be processed in state $q$ we can choose one of two possible children in our tree. Figure \ref{fig:runsOfAWAFA} shows these run-trees for the input $aba$. This connection becomes more clear, if we observe the behavior of the automaton on $aba$:\[q \overset{a}{\rightarrow} q+p \overset{b}{\rightarrow} (q\cdot p) + (q) \overset{a}{\rightarrow} ((q+p) \cdot p)+ (q+p) = q\cdot p + p\cdot p + q + p\] Clearly, every monomial in the final polynomial corresponds to one of the run-trees.
Moreover, the weight of a run-tree is the product of the coefficients in its nodes and its leaves correspond to the final weights of their state.
Thus, we get that the sum of all run-weights is the behavior of the automaton.
\end{exa}

\begin{figure}
\begin{minipage}[b]{.4\textwidth}%
\centering
\begin{tikzpicture}[every edge/.style={main edge},
          every loop/.style={main edge}, 
          every initial by arrow/.style={main edge, initial distance= 10pt},
          every accepting by arrow/.style={main edge, accepting distance= 10pt}]
  \def\x{1.5}
  \def\y{1.5}

  \node[main node] (q1) at (-1*\x,0*\y) {$q$};
  \node[main node] (q2) at (1*\x,0*\y) {$p$};
  \node[inner sep=2pt] (h1) at (-1*\x-.7,0*\y+.7) {\phantom{1}};
  \node[inner sep=2pt] (h2) at (-1*\x-.7,0*\y-.7) {1};
  \node[inner sep=2pt] (h3) at (1*\x+.7,0*\y-.7) {2};

  \path (q1) edge[bend right] node[edge label, below] {$a$} (q2)
             edge[out=45, in=90, looseness=9] node[edge label, xshift=12pt, yshift=0pt] {$b$} (q1)
             edge[out=45, in=135] (q2)
             edge[out=247.5, in=292.5, looseness=8] node[edge label, below] {$a$} (q1)

        (q2) edge node[edge label, above] {$b$} (q1)
            edge[out=247.5, in=292.5, looseness=8] node[edge label, below] {$a$} (q2)

        (h1) edge (q1)
        (q1) edge (h2)
        (q2) edge (h3);
  \draw[draw= grau2, fill=grau2] ([rotate around={45:(q1)}]q1.east) circle (1.5pt);

\end{tikzpicture}
\caption{A non-universal WAFA}\label{fig:nonDetWAFA}
\end{minipage}%
\begin{minipage}[b]{.6\textwidth}%
  \centering
  \begin{tikzpicture}[every edge/.style={main edge}]
    \def\x{2.3}

    \node[inner sep=0pt] (111) at (0,0) {$(q,q)$};
    \node[inner sep=0pt] (121) at (0,-.7) {$(q,q\cdot p)$};
    \node[inner sep=0pt] (131) at (-.5,-1.4) {$(q, q)$};
    \node[inner sep=0pt] (132) at (.5,-1.4) {$(p, p)$};
    \node[inner sep=0pt] (141) at (-.5,-2.1) {$(q, 1)$};
    \node[inner sep=0pt] (142) at (.5,-2.1) {$(p, 2)$};

    \draw (111) edge[-] (121)
          (121) edge[-] (131)
          (121) edge[-] (132)          
          (131) edge[-] (141)
          (132) edge[-] (142);

    \node[inner sep=0pt] (211) at (0+\x,0) {$(q,q)$};
    \node[inner sep=0pt] (221) at (0+\x,-.7) {$(q,q\cdot p)$};
    \node[inner sep=0pt] (231) at (-.5+\x,-1.4) {$(q, p)$};
    \node[inner sep=0pt] (232) at (.5+\x,-1.4) {$(p, p)$};
    \node[inner sep=0pt] (241) at (-.5+\x,-2.1) {$(p, 2)$};
    \node[inner sep=0pt] (242) at (.5+\x,-2.1) {$(p, 2)$};

    \draw (211) edge[-] (221)
          (221) edge[-] (231)
          (221) edge[-] (232)          
          (231) edge[-] (241)
          (232) edge[-] (242);

    \node[inner sep=0pt] (311) at (0+2*\x,0) {$(q,p)$};
    \node[inner sep=0pt] (321) at (0+2*\x,-.7) {$(p,q)$};
    \node[inner sep=0pt] (331) at (0+2*\x,-1.4) {$(q, q)$};
    \node[inner sep=0pt] (341) at (0+2*\x,-2.1) {$(q, 1)$};

    \draw (311) edge[-] (321)
          (321) edge[-] (331)         
          (331) edge[-] (341);

    \node[inner sep=0pt] (411) at (0+3*\x,0) {$(q,p)$};
    \node[inner sep=0pt] (421) at (0+3*\x,-.7) {$(p,q)$};
    \node[inner sep=0pt] (431) at (0+3*\x,-1.4) {$(q, p)$};
    \node[inner sep=0pt] (441) at (0+3*\x,-2.1) {$(q, 2)$};

    \draw (411) edge[-] (421)
          (421) edge[-] (431)         
          (431) edge[-] (441);

  \end{tikzpicture}
  \caption{Runs of \wafa from Figure \ref{fig:nonDetWAFA} on $aba$}\label{fig:runsOfAWAFA}
\end{minipage}
\end{figure}

Due to the bound on the growth of series recognized by \wfa, we can see that $s$ from Example \ref{xmp:serp} is not recognizable by a \wfa. Thus, \wafa are more expressive than \wfa when weights are taken from the non-negative integers. However, this is not the case for every semiring. A semiring $S$ is \textit{locally finite} if for every finite $X\subseteq S$ the generated subsemiring $\langle X \rangle$ is finite.
The following result characterizes semirings on which \wafa and \wfa are equally expressive:

\begin{thmC}[{\cite[Theorem 7.1]{KOSTOLANYI20181}}]\label{thr:WAFAiffWFAiffLocFin} The class of $S$-weighted $\Sigma$-languages recognizable by \wafa and the class of $S$-weighted $\Sigma$-languages recognizable by \wfa are equal if and only if $S$ is locally finite.
\end{thmC}

\section{A characterization of \wafa via weighted finite tree automata}
Our central result Theorem \ref{thr:WafaIffWfta} is included in this section, as well as the definition of weighted finite tree automata.

The connection between alternating automata and trees was utilized before: Already in the non-weighted settings trees were used to define runs of alternating automata.
As seen above, this is possible for \wafa too.
We want to strengthen this connection by the use of tree automata and tree homomorphisms.
In order to do so, we need some additional definitions.

An element $r\in \ser{S}{\tlg}$ is called a \textit{($S$-weighted) tree language}. A \textit{weighted finite tree automaton} (\wfta) is a 4-tuple $\A=\wftaa$, where $Q=\{q_1,\ldots,q_n\}$ is a finite set of states, $\Gamma$ is a ranked alphabet, $\delta=(\delta_k\mid 1\leq k\leq\rank(\Gamma))$ is a family of transition functions $\delta_k:\Gamma^{(k)}\rightarrow S^{Q^k\times Q}$, and $\lambda: Q \rightarrow S$ a root weight function. 

If $k$ is clear from the context, we will denote tuples $(p_1,\ldots,p_k)$ by $\overbar{p}$. Moreover, since $k$ in $\delta_k(g)$ is clear from $g$, we will denote $\delta_k(g)$ by $\delta_g$.

Let $\A=\wftaa$ be a \wfta. Its \textit{state behavior} $\bhvs{\A}: Q\times \tlg\rightarrow S$ is the mapping defined by \[\bhvs{A}\big(q,g(t_1,\ldots,t_k)\big) =\sum\limits_{\overbar{p}\in Q^k}\delta_g(\overbar{p},q) \cdot \prod_{i=1}^{k} \bhvs{\A}(p_i,t_i)\enspace.\]
Usually, we will write $\bhvs{\A}_q(t)$ instead of $\bhvs{\A}(q,t)$. 
Now, the \textit{behavior} of $\A$ is the weighted tree language $\bhv{\A}:\tlg\rightarrow S$ defined by \[\bhv{A}(t)= \sum_{i=1}^{n}\lambda(q_i)\cdot \bhvs{\A}_{q_i}(t)\enspace.\]
A weighted tree language $s$ is \textit{recognized} by $\A$ if and only if $\bhv{\A}=s$.

It is well known that a word over $\Sigma$ can be represented as a $1$-ary tree: Each letter of $\Sigma$ is given rank one and a new end-symbol $\#$ of rank zero is added. Then $w_0w_1\ldots w_n$ translates to the tree $w_0(w_1(\ldots w_n(\#)\ldots))$. Here, we want to represent words as full $r$-ary trees for any arbitary $r \in \N$. Given an alphabet $\Sigma$ and $r\geq 1$, we define the ranked alphabet $\Sigma_\#^{r}=\Sigma \cup \{\#\}$ with $\rank(\#)=0$ and $\rank(a)=r$ for all $a \in \Sigma$. For all $w\in \Sigma^*$ the tree $\tw\in T_{\Sigma_\#^r}$ is defined by $t^r_\varepsilon=\#$; and $t^r_w=a(t^r_v,\ldots,t^r_v)$ if $w=av$ with $a\in \Sigma$. We call $h^r:\Sigma^* \rightarrow T_{\Sigma_\#^r}: w\mapsto t_w^r$ the \textit{generic tree homomorphism (of rank $r$)}. The case $r=1$ is special since for all $t\in T_{\Sigma_\#^1}$ there exists $w\in \Sigma^*$ such that $t=t_w^1$. Therefore, if clear from the context, we will identify $\Sigma$ and $\Sigma_\#^1$, $\Sigma^*$ and $T_{\Sigma_\#^1}$, as well as $w$ and $t_w^1$. It is well known that a weighted $\Sigma$ language is recognizable by a \wfa over $\Sigma$ if and only if it is recognizable by a \wfta over $\Sigma_\#^1$.

The key observation is that the behavior of a \wafa $\A$ on $w$ can be characterized by the behavior of a \wfta on $\tw$ where $r$ is the degree of polynomials in an equalized version of $\A$. Even more, the behavior of a \wfta on $h(w)$ (where $h$ is a tree homomorphism) can be characterized by the behavior of a \wafa on $w$.

\begin{lem}\label{lmm:d1WafaIffWfta} If $s\in \ser{S}{\Sigma^*}$ is recognized by a \wafa, then $s=\bhv{\B}\circ h^r$ for some \wfta $\B$ and the generic homomorphism $h^r$ for some $r\in \N$.
\end{lem}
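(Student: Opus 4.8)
The plan is to read a weighted tree automaton directly off the run-tree semantics for nice \wafa recalled above, using the full $r$-ary encoding $h^r$ of words, where $r$ is the common monomial degree of an equalized version of the given automaton.

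\emph{Normalisation and construction.} Given a \wafa recognising $s$, first apply Lemma~\ref{1Lmm} and then Lemma~\ref{eqLmm} to obtain an equivalent \wafa $\A=(Q,\Sigma,\delta,q_1,\tau)$, $Q=\{q_1,\dots,q_n\}$, that is nice, equalized and purely polynomial; let $r$ be the common degree of the monomials in $\delta$ (so $r\ge 1$ by niceness property~(ii); if $\delta$ has no monomials, take $r=1$). Then every monomial in $\delta(q,a)$ is $q_1^{k_1}\cdots q_n^{k_n}$ with $\sum_u k_u=r$ and coefficient $1$, and $P_0=q_1$. Now set $\B=(Q,\Sigma_\#^r,\delta',\lambda)$ on the same state set. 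To a monomial $m=q_1^{k_1}\cdots q_n^{k_n}$ of degree $r$ associate the tuple $\overbar{p}^{\,m}\in Q^r$ listing $q_1$ a total of $k_1$ times, then $q_2$ a total of $k_2$ times, and so on — this is exactly the ordered list of children that clause~(ii) of the run-tree definition forces on an $m$-labelled node. Put $\delta'_a(\overbar{p},q)=1$ if $\overbar{p}=\overbar{p}^{\,m}$ for some $m\in M_{(q,a)}$ and $0$ otherwise; put $\delta'_\#(\overbar{p},q)=\tau(q)$ for the unique $\overbar{p}\in Q^0$; and set $\lambda(q_1)=1$, $\lambda(q_i)=0$ for $i\neq 1$.

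\emph{Verification.} The claim is $\bhvs{\B}_q(\tw)=\bhvs{\A}_q(w)$ for all $q\in Q$, $w\in\Sigma^*$, which I would prove by induction on $|w|$. For $w=\varepsilon$ both sides equal $\tau(q)$ (here $Q^0$ is a singleton and the empty product is $1$). For $w=av$ we have $\tw=a(t^r_v,\dots,t^r_v)$, so writing $p^m_i$ for the $i$-th entry of $\overbar{p}^{\,m}$ and using the \wfta semantics, the induction hypothesis, and that only the tuples $\overbar{p}^{\,m}$ have nonzero weight,
\[\bhvs{\B}_q(\tw)=\sum_{\overbar{p}\in Q^r}\delta'_a(\overbar{p},q)\cdot\prod_{i=1}^{r}\bhvs{\B}_{p_i}(t^r_v)=\sum_{m\in M_{(q,a)}}\ \prod_{i=1}^{r}\bhvs{\A}_{p^m_i}(v)\,.\]
Since $S$ is commutative, $\prod_{i=1}^{r}\bhvs{\A}_{p^m_i}(v)=\prod_{j=1}^{n}\bhvs{\A}_{q_j}(v)^{k_j}$, and summing over $m\in M_{(q,a)}$ gives precisely $\delta(q,a)\langle\bhvs{\A}_{q_1}(v),\dots,\bhvs{\A}_{q_n}(v)\rangle=\bhvs{\A}_q(w)$. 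Finally, since $P_0=q_1$ and $\lambda$ is supported on $q_1$, we get $\bhv{\B}(\tw)=\bhvs{\B}_{q_1}(\tw)=\bhvs{\A}_{q_1}(w)=\bhv{\A}(w)=s(w)$, that is $s=\bhv{\B}\circ h^r$ with $r\in\N$.

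\emph{Main obstacle.} The only subtle point is that the \wfta behaviour sums over \emph{all} tuples $\overbar{p}\in Q^r$, whereas the \wafa behaviour sums over the monomials of $\delta(q,a)$. These reconcile because $S$ is commutative: $\prod_i\bhvs{\A}_{p_i}(v)$ depends only on the multiset underlying $\overbar{p}$, so placing the entire weight $1$ on the single sorted representative $\overbar{p}^{\,m}$ of each monomial is harmless, and distinct monomials of equal degree determine distinct multisets, so nothing is over-counted. Equalization is exactly what makes every $a$-labelled node of $\tw$ have $r$ children, so that the encoding rank and the transition arities line up; without it the construction would not typecheck.
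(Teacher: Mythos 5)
Your proposal is correct and takes essentially the same approach as the paper: normalize to a nice, equalized \wafa, build a \wfta over $\Sigma_\#^r$ whose transitions encode the monomials via the state-order-sorted tuple, and prove $\bhvs{\B}_q(\tw)=\bhvs{\A}_q(w)$ by induction on $|w|$ using commutativity, concluding via $P_0=q_1$ and $\lambda=\chr{\{q_1\}}$. The only cosmetic difference is that you additionally invoke the purely-polynomial normal form so every transition weight is $1$, whereas the paper keeps the monomial coefficient $s$ as the transition weight $\beta_a(\overbar{p},q)=s$; both variants are sound.
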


\begin{proof} Assume $s$ is recognized by a \wafa. Due to Lemma \ref{1Lmm} and Lemma \ref{eqLmm}, we may assume that $s$ is recognized by a nice and equalized \wafa $\A=(Q,\Sigma,\alpha,P_0,\tau)$. Let $r$ be the unique degree of monomials in $\A$. Our goal is to define a \wfta $\B$ with $\bhv{\A}=\bhv{\B}\circ h^r$. To this purpose, we observe that the equalized $\A$ can be viewed as a \wfta over a ranked alphabet where each letter has rank $r$ and where each multi-arrow in the representation of $\A$ corresponds to one transition in $\B$. Formally, we define the \wfta $\B=(Q,\Sigma^r_{\#},\beta,\lambda)$ with:
\[
\begin{array}{lcl}
  \lambda &= &\chr{\{q_1\}}\\
  \beta_\#(\varepsilon,q)&=&\tau(q)\enspace ,\\
  \beta_a(\overbar{p},q)&=&\begin{cases}s &\text{if } p_1 \leq \ldots \leq p_r \text{ according to the order on } Q \text{ and }\\ &\text{\phantom{if }}s\cdot p_1\cdot\ldots\cdot p_{r}\in M_{(q,a)}\enspace, \text{ and}\\
                        0 &\text{otherwise}\enspace ,\end{cases}
\end{array}
\]
for all $q \in Q$ and $\overbar{p} \in Q^r$.

Note that the order of the $p_1, \ldots, p_r$ needs to be fixed, otherwise many runs of the \wfta may correspond to one run of the \wafa.

Now, we will show that $\bhvs{\A}_q(w)=\bhvs{\B}_q(\tw)$ for all $q \in Q$ and $w \in \Sigma^*$ by induction on $|w| \in \N$. If $|w|=0$, then $w=\varepsilon$. Thus, $\bhvs{\A}_q(w)=\tau(q)=\beta_\#(\varepsilon,q)=\bhvs{\B}_q(t^r_w)$ for all $q\in Q$. Assume there exists $l\in \N$ such that the claim holds for all $w' \in \Sigma^{l}$ and all $q\in Q$. For the induction step let $w=aw'\in \Sigma\cdot \Sigma^{l}$. For $\overbar{p}\in Q^k$ let $k_i(\overbar{p})$ denote the number of $q_i$'s in $\overbar{p}$. Then, we get
  \[\begin{array}{cl}
   &\bhvs{\A}_q(w)\\
  = &\alpha(q,a)\langle\bhvs{\A}_{q_1}(w'),\ldots, \bhvs{\A}_{q_n}(w')\rangle\\
  \overset{\mathclap{\textup{(IH)}}}{=} &\alpha(q,a)\langle\bhvs{\B}_{q_1}(t^r_{w'}),\ldots, \bhvs{\B}_{q_n}(t^r_{w'})\rangle\\
  = &\sum\limits_{s\cdot q_1^{k_1}\ldots q_n^{k_n}\in M_{(q,a)}} s\cdot \prod\limits_{i=1}^n {\big(\bhvs{\B}_{q_i}(t^r_{w'})\big)}^{k_i}\\
  = &\sum\limits_{s\cdot q_1^{k_1}\ldots q_n^{k_n}\in M_{(q,a)}} \beta_a(\overbrace{\underbrace{q_1,\ldots, q_1}_{k_1\text{ times}}, \ldots , \underbrace{q_{n},\ldots, q_{n}}_{k_n\text{ times}}}^{\sum k_i=r\text{ times}}, q) \cdot \prod\limits_{i=1}^n {\big(\bhvs{\B}_{q_i}(t^r_{w'})\big)}^{k_i}\\
  =&\sum\limits_{\overbar{p}\in Q^r}\beta_a(\overbar{p},q) \cdot \prod\limits_{i=1}^n {\big(\bhvs{\B}_{q_i}(t^r_{w'})\big)}^{k_i(\overbar{p})}\\
  =&\bhvs{\B}_{q}\big(a(\underbrace{t^r_{w'},\ldots,t^r_{w'}}_{r\text{ times}})\big)\\
  =&\bhvs{\B}_{q}(\tw)
  \end{array}\] for all $q\in Q$. Which completes our induction.

Since $\A$ is nice and thus $P_0=q_1$, we consequently have \[\bhv{\A}(w) = \bhvs{\A}_{q_1}(w) = \bhvs{\B}_{q_1}(\tw) = \sum_{i=1}^{n}\lambda(q_i)\cdot \bhvs{\B}_{q_i}(\tw) = \bhv{\B}(\tw)\] for all $w\in \Sigma^*$. Since $h^r(w)=t_w^r$, this finishes our proof.
\end{proof}

The following example illustrates this connection between \wafa and \wfta.

\begin{exa} We consider the automaton $\A$ from Example \ref{xmp:serp}. It is easy to construct the corresponding \wfta $\B=(Q,\Gamma,\beta,\lambda)$ from the equalized version $\A'$ (Figure \ref{rep2}). First, we copy the set of states (in order) $Q=\{q,p,h_1\}$. Since the maximum degree of polynomials in $\A$ was $2$ we get $\Gamma=\{a^{(2)}, b^{(2)},{\#}^{(0)}\}$. The root weight function corresponds to the initial weights. However, $\A'$ is nice and thus $\lambda = \chr{\{q\}}$. The transition weight functions $\beta_a$ and $\beta_b$ can be defined using the multi arrows in Figure \ref{rep2}. For example, the $b$-labeled multi arrow in the middle corresponds to $\beta_b(ph_1,q)=1$. Finally, the final weights in $\A'$ are captured by $\beta_\# (\varepsilon,q)=1, \beta_\# (\varepsilon,h_1)=1$, and $\beta_\# (\varepsilon,p)=2$. The only non-zero run on $t^2_{ab}$ can be seen in Figure \ref{rep3}.

\begin{figure}
\centering
\begin{tikzpicture}
    \node (hash) at (-5,-1.8) {\textbf{\#}};
    \node (b) at (-3,-1.8) {\textbf{b}};
    \node (a) at (-1,-1.8) {\textbf{a}};
    
    \node (root) at (2,0) {$1$};
    \node (1) at (0,0) {$q$};
    \node (2) at (-2,0.7) {$q$};
    \node (3) at (-2,-0.7) {$q$};
    \node (4) at (-4,1.05) {$p$};
    \node (5) at (-4,0.35) {$h_1$};
    \node (6) at (-4,-0.35) {$p$};
    \node (7) at (-4,-1.05) {$h_1$};
    \node (8) at (-6,1.05) {$2$};
    \node (9) at (-6,0.35) {$1$};
    \node (10) at (-6,-0.35) {$2$};
    \node (11) at (-6,-1.05) {$1$};

    \path (11) edge[->] (7)
          (10) edge[->] (6)
          (9) edge[->] (5)
          (8) edge[->] (4)
          (7) edge[->] node[midway, below] {$2$} (3)
          (6) edge[->] (3)
          (5) edge[->] node[midway, below] {$2$} (2)
          (4) edge[->] (2)
          (3) edge[->] node[midway, below] {$1$}(1)
          (2) edge[->] (1)
          (1) edge[->] (root);
\end{tikzpicture}
\caption{Run of translated \wfta on $t^2_{ab}$}\label{rep3}
\end{figure}
\end{exa}

\begin{lem}\label{lmm:d2WafaIffWfta} Let $\B=\wftaa$ a \wfta and $h:\Sigma^*\rightarrow T_\Gamma$ a tree homomorphism. Then $\bhv{\B}\circ h$ is recognized by a \wafa.
\end{lem}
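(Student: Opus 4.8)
The plan is to run the construction of Lemma~\ref{lmm:d1WafaIffWfta} in reverse: given the \wfta $\B=\wftaa$ and the tree homomorphism $h:\Sigma^*\to T_\Gamma$ (which here, since the domain is words coded as unary trees over $\Sigma_\#^1$, assigns to each $a\in\Sigma$ a tree $h(a)=t_a\in T_\Gamma[X_1]$ and fixes $h(\#)=t_\#\in T_\Gamma$), I would build a \wafa $\A$ over $\Sigma$ whose states encode pairs consisting of a state of $\B$ and a position inside one of the finitely many pattern trees $t_a$ or $t_\#$. The value $\bhvs{\B}_q(h(w))$ for $w=av$ unfolds as a sum over all state-labellings of the pattern tree $t_a$ that reads $q$ at the root and plugs the subtrees $h(v)$ into the (unique, modulo deletion) $X_1$-leaves; this sum-of-products over a fixed finite tree is exactly the kind of expression a polynomial transition $\delta(\cdot,a)$ over $Q$ can express, so I would first define, for each $a\in\Sigma$ and each state $q$, the polynomial $P_{t_a,q}\in\pols{Q}$ obtained by recursively evaluating $t_a$ bottom-up: a node labelled $g\in\Gamma^{(k)}$ with children contributing polynomials $R_1,\dots,R_k$ contributes $\sum_{\overbar p\in Q^k}\delta_g(\overbar p,q')\cdot R_1\cdots R_k$ when we want it to be read in state $q'$, and a leaf labelled $X_1$ simply contributes the indeterminate $q''$ for whichever state we want there — summing over all state choices. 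Then set $\delta(q',a)=P_{t_a,q'}$, $\tau(q')=\bhvs{\B}_{q'}(t_\#)$, and $P_0=\sum_{i}\lambda(q_i)\cdot q_i$.

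The key steps, in order, would be: (1) make the above definition of $P_{t,q'}\in\pols{Q}$ precise by structural induction on the pattern tree $t\in T_\Gamma[X_1]$, with the base cases $t=X_1\mapsto q'$ (the indeterminate) and $t=c\in\Gamma^{(0)}\mapsto \bhvs{\B}_{q'}(c)$ (a constant of $S$); (2) prove the central identity $P_{t,q'}\langle \bhvs{\B}_{q_1}(u),\dots,\bhvs{\B}_{q_n}(u)\rangle = \bhvs{\B}_{q'}\big(t\langle X_1\leftarrow u\rangle\big)$ for every $t\in T_\Gamma[X_1]$, every state $q'$ and every $u\in T_\Gamma$, again by induction on $t$ — this just repackages the definition of $\bhvs{\B}$ and the substitution lemma for tree behaviours; (3) prove by induction on $|w|$ that $\bhvs{\A}_{q'}(w)=\bhvs{\B}_{q'}(h(w))$ for all $q'\in Q$, the base case being $w=\varepsilon$ where both sides equal $\bhvs{\B}_{q'}(t_\#)$, and the step $w=av$ combining the definition $\bhvs{\A}_{q'}(av)=\delta(q',a)\langle\dots\bhvs{\A}_{q_i}(v)\dots\rangle$, the induction hypothesis, step~(2) with $u=h(v)$, and the homomorphism property $h(av)=t_a\langle X_1\leftarrow h(v)\rangle$; (4) conclude $\bhv{\A}(w)=P_0\langle\dots\rangle=\sum_i\lambda(q_i)\bhvs{\B}_{q_i}(h(w))=\bhv{\B}(h(w))$.

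I expect the main obstacle to be bookkeeping around the tree homomorphism rather than anything conceptually deep. Two points need care. First, a general tree homomorphism need not be linear or non-deleting, so the pattern tree $t_a$ may contain $X_1$ zero, one, or several times; my recursive definition of $P_{t,q'}$ handles all these uniformly because an $X_1$-leaf always just contributes the single indeterminate $q''$ and substitution copies $\bhvs{\B}_{q'}(h(v))$ into each occurrence — so the identity in step~(2) stays valid, but I should state it with the simultaneous-substitution notation $t\langle X_1\leftarrow u\rangle$ to cover multiple occurrences cleanly (and remark that if $X_1$ does not occur then the statement degenerates to $P_{t,q'}$ being a constant equal to $\bhvs{\B}_{q'}(t)$). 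Second, the resulting $\delta(q',a)$ is a genuine polynomial over $Q$ with coefficients in $S$ (possibly with constant monomials, coming from the $\Gamma^{(0)}$-leaves of $t_a$), which is fine since our definition of \wafa permits constants; there is no need to pass to a nice normal form here. The only genuinely technical verification is the substitution identity of step~(2), which is routine once one has the definitions lined up, so I would present it as a short induction and leave the arithmetic to the reader.
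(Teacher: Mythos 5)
Your proposal is correct and follows essentially the same route as the paper's proof: your polynomial $P_{t_a,q}$ is exactly the paper's extended transition weights $\delta'_{h(a)}(\overline{p},q)$ packaged as the polynomial $\sum_{\overline{p}}\delta'_{h(a)}(\overline{p},q)\,p_1\cdots p_k$, your step (2) is the paper's auxiliary substitution claim (stated there for possibly distinct trees at the variable occurrences, though only the uniform case you use is needed), and the concluding induction on $|w|$ and the choices of $P_0$ and $\tau$ coincide with the paper's.
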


\begin{proof}
Assume $s=\bhv{\B}\circ h \in \ser{S}{\Sigma^*}$, where $\B=\wftaa$ is a \wfta with $|Q|=n$ and $h: \Sigma^* \rightarrow \tlg$ a tree homomorphism. We want to construct a \wafa $\A$ such that $\bhv{\A}=s$.

If $h$ is the generic homomorphism, we can define $\delta'(q,a)=\sum_{\overbar{p}\in Q^{r}} \delta_a(\overbar{p},q)$ and use the same proof as in the first direction. However, we want to prove this for arbitrary homomorphisms. To achieve this, we give some additional definitions.

Under $h$, each letter becomes a tree. Nonetheless, we are not interested in the structure of $h(a)$, but want to handle it as if it is a ranked letter. Therefore, we use $h(a)\langle x_1\leftarrow(p_1,\ldots,p_{r})\rangle$ to disambiguate its $r=\ra(h(a))$ variables. Furthermore, we extend the family of transition functions $(\delta_k)_{0\leq k\leq\rank{\Gamma}}$ into a family $(\delta'_k)_{k\in \N}$ with $\delta'_k:T_{\Gamma}^{(k)}(\{x_1\})\rightarrow S^{Q^k\times Q}$. We use the same notations for $\delta'$ as for $\delta$ and define $\delta'_k$ recursively as follows. 

\begin{enumerate}
  \item For all $g\in \Gamma^{(0)}$ let $\delta'_g(\varepsilon,q) =  \delta_g(\varepsilon,q)$ for all $q\in Q$,
  \item $\delta'_{x_1}(p,q) =  \chr{\{q\}}(p)$ for all $(p,q)\in Q\times Q$, and
  \item if $t=g(t_1,\ldots,t_k)$ for some $g\in \Gamma^{(k)}$, we define \[\delta'_t(\overbar{p}_1,\ldots,\overbar{p}_k,q) =\sum\limits_{\overbar{p}'\in Q^k}\delta_g(\overbar{p}',q) \cdot\prod_{i=1}^k \delta'_{t_i}(\overbar{p}_i, p'_i)\enspace\]for all $(\overbar{p}_1,\ldots,\overbar{p}_k,q)\in Q^{\ra(t_1)}\times\ldots\times Q^{\ra(t_k)}\times Q$.
\end{enumerate}Please note, for $t\in T_{\Gamma}$ we have $\delta'_t(\varepsilon,q)=\bhvs{\A}_q(t)$ by definition.

Now, we are well-equipped to define the \wafa $\A=(Q,\Sigma, \alpha,P_0,\tau)$. Let $P_0=\sum_{i=1}^n {\lambda(q_i)\cdot q_i}$, $\tau(q_i)=\delta'_{h(\#)}(\varepsilon,q_i)$ for all $1\leq i\leq n$, and \newline
$\alpha(q,a)=\sum_{\overbar{p} \in Q^{\ra(h(a))}} {\delta'_{h(a)}(\overbar{p},q) \cdot \prod_{i=1}^{\ra(h(a))}p_i}$ for all $q\in Q$, $a \in \Sigma$.

Next, we show $\bhvs{\B}_q(h(w))=\bhvs{\A}_q(w)$ for all $q \in Q$, $w \in \Sigma^*$. Again, by induction on $|w| \in \N$. Before we can do so, we have to prove the following auxiliary claim about $\delta'$.
\begin{clm}\label{clm:aux1} Let $\A=\wftaa$ be a \wfta, $t\in T_{\Gamma}$, $\hat{t}\in T_{\Gamma\cup \{x_1\}}$, and $t_1, \ldots,t_k \in T_{\Gamma}$ such that $t=\hat{t}\langle x_1\leftarrow(t_1,\ldots,t_k)\rangle$.
  Then \[\bhvs{\A}_q(t)= \sum_{\overbar{p}\in Q^k} \delta'_{\hat{t}}(\overbar{p},q)\cdot\prod_{i=1}^k \bhvs{\A}_{p_i}(t_i)\] for all $q\in Q$.
\end{clm}

\begin{proof}We prove this claim via induction on the depth of $\hat{t}$. The claim is clear if $k=0$. Therefore, we assume $k>0$. If $\hat{t}=x_1$, we know $t=t_1$ and get \[\begin{array}{ll}&\bhvs{\A}_q(t)\\=&\sum_{p\in Q} \chr{\{q\}}(p) \cdot \bhvs{\A}_p(t_1)\\
=&\sum_{p\in Q} \delta'_{x_1}(p,q) \cdot \bhvs{\A}_p(t_1) \enspace.\end{array}\]
Now, assume the claim holds for all $\hat{t}\in T_{\Gamma}$ with depth $\leq n$ for some $n\in \N$.
Assume $\hat{t}$ has depth $n+1$ and $\hat{t}=g(\hat{t}_1,\ldots,\hat{t}_r)$ for some $g\in \Gamma^{(r)}$ and $\hat{t}_1,\ldots,\hat{t}_r \in T_{\Gamma\cup\{x_1\}}$. Of the $k$ trees only the first $\ra(\hat{t}_1)$ will be substituted in $\hat{t}_1$, only the second $\ra(\hat{t}_2)$ will be substituted in $\hat{t}_1$, and so on. To mark this, let $k(i)=\sum_{l=1}^{i}\ra(\hat{t}_l)$, $\overbar{t}_i=(t_{k(i-1)+1},\ldots,t_{k(i)})$, and $\overbar{p}_i=(p_{k(i-1)+1},\ldots,p_{k(i)})$ for all $1\leq i\leq r$. We get:
\[\begin{array}{ll}&\bhvs{\A}_q(t)\\=&\sum\limits_{\overbar{p}'\in Q^{r}} \delta_g(\overbar{p}',q)\cdot \prod\limits_{i=1}^{r}\bhvs{\A}_{p'_i}(\hat{t}_i\langle x_1\leftarrow\overbar{t}_i\rangle)\\\overset{\mathclap{\textup{(IH)}}}{=}&\sum\limits_{\overbar{p}'\in Q^{r}} \delta_g(\overbar{p}',q)\cdot\prod\limits_{i=1}^{r}\sum\limits_{\overbar{p}_i\in Q^{\ra(\hat{t}_i)}}\delta'_{\hat{t}_i}(\overbar{p}_i,p'_i)\cdot\prod\limits_{j=r(i-1)+1}^{r(i)}\bhvs{\A}_{p_{j}}(t_j)\\=&\sum\limits_{\overbar{p}'\in Q^{r}} \delta_g(\overbar{p}',q)\cdot \sum\limits_{(\overbar{p}_1,\ldots,\overbar{p}_r)\in Q^{k}}\prod\limits_{i=1}^{r}\big(\delta'_{\hat{t}_i}(\overbar{p}_i,p'_i)\big)\cdot\prod\limits_{j=1}^{k}\bhvs{\A}_{p_{j}}(t_j)\\=& \sum\limits_{(\overbar{p}_1,\ldots,\overbar{p}_r)\in Q^{k}} \sum\limits_{\overbar{p}'\in Q^{r}}\Big( \delta_g(\overbar{p}',q)\cdot \prod\limits_{i=1}^{r}\big(\delta'_{\hat{t}_i}(\overbar{p}_i,p'_i)\big)\cdot\prod\limits_{j=1}^{k}\bhvs{\A}_{p_{j}}(t_j)\Big)\\=&\sum\limits_{(\overbar{p}_1,\ldots,\overbar{p}_r)\in Q^{k}} \sum\limits_{\overbar{p}'\in Q^{r}}\Big(\delta_g(\overbar{p}',q)\cdot \prod\limits_{i=1}^{r}\big(\delta'_{\hat{t}_i}(\overbar{p}_i,p'_i)\big)\Big)\cdot\prod\limits_{j=1}^{k}\bhvs{\A}_{p_{j}}(t_j)\\=&\sum\limits_{(\overbar{p}_1,\ldots,\overbar{p}_r)\in Q^{k}} \delta'_{\hat{t}}(\overbar{p}_1,\ldots,\overbar{p}_r,q)\cdot\prod\limits_{j=1}^{k}\bhvs{\A}_{p_{j}}(t_j)\\=&\sum\limits_{\overbar{p}\in Q^{k}} \delta'_{\hat{t}}(\overbar{p},q)\cdot\prod\limits_{j=1}^{k}\bhvs{\A}_{p_{j}}(t_j)\end{array}\] This finishes the proof of our claim.
\end{proof}
We return to our main proof. If $|w|=0$, then $w=\varepsilon$. Therefore, $\bhvs{\B}_q(w)=\tau(q)=\delta'_{h(\#)}(\varepsilon,q)\overset{*}{=} \bhvs{\A}_q(h(w))$. Here $*$ holds due to Claim \ref{clm:aux1} with $k=0$ and $\hat{t}=t=h(\#)$. Assume there exists some $l\in \N$ such that $\bhvs{\B}_q(h(w))=\bhvs{\A}_q(w)$ for all $q \in Q$, $w \in \Sigma^l$. For the induction step let $w=aw'\in \Sigma\cdot \Sigma^l$ and $k=\ra(h(a))$. Similar to the first direction, we have \[\begin{array}{ll}&\bhvs{\A}_q(aw')\\=& \sum\limits_{\overbar{p}\in Q^{k}} \delta'_{h(a)}(\overbar{p},q) \prod\limits_{i=1}^{k}\bhvs{\A}_{p_i}(w')\\ \overset{\mathclap{\textup{(IH)}}}{=}&\sum\limits_{\overbar{p}\in Q^{k}} \delta'_{h(a)}(\overbar{p},q) \prod\limits_{i=1}^{k}\bhvs{\B}_{p_i}(h(w'))\\ \overset{\mathclap{\text{C.}\ref{clm:aux1}}}{=}&\bhvs{\B}_q(h(a)\langle h(w')\rangle)\\=&\bhvs{\B}_{q}(h(w))\end{array}\] for all $q\in Q$. Via induction, this finishes our proof of the second direction.

Finally, for all $w\in \Sigma^*$ we get:
\begin{align*}
  \bhv{\B}(h(w)) &= \sum_{i=1}^n \lambda(q_i) \cdot \bhvs{\B}_{q_i}(h(w))
  = \sum_{i=1}^n \lambda(q_i) \cdot \bhvs{\A}_{q_i}(w)\\
                 &= P_0\langle\bhvs{\B'}_{q_1}(w),\ldots,\bhvs{\A}_{q_n}(w)\rangle
  = \bhv{\A}(w)
\end{align*}
\end{proof}

This leads us to our main result.

\begin{thm} \label{thr:WafaIffWfta} A weighted language $s\in \ser{S}{\Sigma^*}$ is recognized by a \wafa if and only if there exists a ranked alphabet $\Gamma$, a tree homomorphism $h: \Sigma^* \rightarrow \tlg$, and a \wfta $\A=\wftaa$ such that $s=\bhv{\A}\circ h$.
\end{thm}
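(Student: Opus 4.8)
The statement is an immediate consequence of the two directions established in Lemmas~\ref{lmm:d1WafaIffWfta} and~\ref{lmm:d2WafaIffWfta}, so the plan is simply to assemble them and to verify that the generic tree homomorphism appearing in the first lemma is indeed a tree homomorphism in the sense defined in Section~2.

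For the forward direction, suppose $s\in\ser{S}{\Sigma^*}$ is recognized by a \wafa. By Lemma~\ref{lmm:d1WafaIffWfta} there is a \wfta $\B$ and an $r\in\N$ such that $s=\bhv{\B}\circ h^r$. I would take $\Gamma=\Sigma^r_\#$ and $h=h^r$, and check that $h^r:\Sigma^*\to T_{\Sigma^r_\#}$, which sends $w=av$ to $a(t^r_v,\ldots,t^r_v)$ and $\varepsilon$ to $\#$, is a tree homomorphism: identifying $\Sigma^*$ with $T_{\Sigma^1_\#}$ as discussed after the definition of $h^r$, each letter $a\in\Sigma$ (of rank $1$ in $\Sigma^1_\#$) is assigned the tree $h^r(a)=a(x_1,\ldots,x_1)\in T_{\Sigma^r_\#}[X_1]$ and $h^r(\#)=\#$, and then indeed $h^r(a(t))=h^r(a)\langle h^r(t)\rangle$ for all $t$, so $h^r$ is a tree homomorphism. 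Hence $s=\bhv{\B}\circ h^r$ has the required form with $\A:=\B$.

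For the converse, suppose there exist a ranked alphabet $\Gamma$, a tree homomorphism $h:\Sigma^*\to T_\Gamma$, and a \wfta $\A$ with $s=\bhv{\A}\circ h$. Then Lemma~\ref{lmm:d2WafaIffWfta}, applied directly to $\B:=\A$ and $h$, yields a \wafa recognizing $\bhv{\A}\circ h=s$. This completes both implications.

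There is essentially no obstacle here: the real work has already been done in the two lemmas, and the only genuinely new point is the bookkeeping observation that $h^r$ matches the Section~2 notion of a tree homomorphism (when $\Sigma^*$ is viewed as $T_{\Sigma^1_\#}$), which is routine. I would keep the write-up to a few lines, citing the two lemmas and spelling out the identification $h(a)=a(x_1,\ldots,x_1)$ so the reader sees why $h^r$ qualifies.
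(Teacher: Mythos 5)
Your proposal is correct and matches the paper's proof exactly: the theorem is stated there as an immediate consequence of Lemma~\ref{lmm:d1WafaIffWfta} (taking $\Gamma=\Sigma^r_\#$ and $h=h^r$) and Lemma~\ref{lmm:d2WafaIffWfta}. Your additional check that $h^r$ is a tree homomorphism in the sense of Section~2 is a harmless bit of bookkeeping the paper leaves implicit.
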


\begin{proof} This is an immediate consequence of Lemma \ref{lmm:d1WafaIffWfta} and Lemma \ref{lmm:d2WafaIffWfta}.
\end{proof}

This result allows us to transfer results from \wfta to \wafa. Moreover, additional observations in the proofs show that one can give a weight preserving, bijective mapping between the runs of $\A$ and $\B$. This allows us to translate results about runs of \wfta into results about runs of \wafa. 
To see how to do this, we first revisit a well known result for non-weighted \wafa.

\begin{cor} Non-weighted alternating automata and finite automata are equally expressive.
\end{cor}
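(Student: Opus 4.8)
The plan is to derive this as a specialization of Theorem~\ref{thr:WafaIffWfta} (together with the Boolean case of Theorem~\ref{thr:WAFAiffWFAiffLocFin}), but since the statement concerns the \emph{unweighted} setting it is cleanest to read it directly through the known correspondence with tree automata. First I would fix the semiring to be the Boolean semiring $\mathbb{B}=(\{0,1\},\vee,\wedge,0,1)$, so that a weighted language $s\in\ser{\mathbb{B}}{\Sigma^*}$ is just (the characteristic function of) a language $L\subseteq\Sigma^*$, a \wfa over $\mathbb{B}$ is an ordinary nondeterministic finite automaton, and a \wafa over $\mathbb{B}$ is an ordinary alternating finite automaton: the initial polynomial $P_0\in\mathbb{B}\langle Q\rangle$ and each $\delta(q,a)$ become (after collapsing coefficients, which are $0$ or $1$) monotone Boolean formulas over $Q$, which is exactly the transition datum of an AFA in positive normal form. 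Likewise a \wfta over $\mathbb{B}$ is an ordinary (nondeterministic) finite tree automaton.

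Next I would invoke the direction already essentially recorded in the excerpt: the discussion after Lemma~\ref{lmm:d2WafaIffWfta} notes that the proof of Theorem~\ref{thr:WafaIffWfta} yields a weight-preserving bijection between the runs of the \wafa and the runs of the associated \wfta, and in the Boolean case ``weight-preserving'' simply means ``successful runs correspond to successful runs''. Hence over $\mathbb{B}$ a language $L$ is recognized by an AFA if and only if $L=h^{-1}(T)$ for some regular tree language $T\subseteq T_\Gamma$ and some tree homomorphism $h\colon\Sigma^*\to T_\Gamma$ (here $T=\bhv{\B}$ as a Boolean tree language, i.e. the set of accepted trees, and $h^{-1}(T)=\{w\mid h(w)\in T\}$). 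Then I would use the classical fact that the class of regular tree languages is closed under inverse tree homomorphisms (this is standard; it also follows from Theorem~\ref{thr:clOTAUIHom} instantiated at $\mathbb{B}$, since $\mathbb{B}$ is locally finite and hence lies in whatever class that theorem identifies). Therefore $h^{-1}(T)$ is a regular tree language over $\Sigma^1_\#$, i.e.\ under the identification of $\Sigma^*$ with $T_{\Sigma^1_\#}$ it is a regular word language, i.e.\ recognized by an ordinary NFA. Conversely every NFA is trivially a special \wafa, so both classes coincide.

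An alternative, more self-contained route avoids tree automata entirely: instantiate Theorem~\ref{thr:WAFAiffWFAiffLocFin} at $S=\mathbb{B}$. Since $\mathbb{B}$ is finite it is locally finite, so that theorem already says the $\mathbb{B}$-weighted languages recognized by \wafa coincide with those recognized by \wfa; unravelling the Boolean semiring as above, this is precisely the statement that AFA and NFA recognize the same languages. I would present the tree-automaton argument as the ``conceptual'' proof (it is the one that fits the theme of the section) and mention the Theorem~\ref{thr:WAFAiffWFAiffLocFin} instantiation as a one-line alternative. The only genuine subtlety — and the step I would flag as the main obstacle — is the bookkeeping needed to see that the normal-form reductions of Section~3 (Lemma~\ref{1Lmm}, Lemma~\ref{eqLmm}) and the \wafa/\wfta translation of Theorem~\ref{thr:WafaIffWfta} stay inside the Boolean world: one must check that replacing constants by deadlock states, equalizing degrees, and introducing coefficient-states all specialize, over $\mathbb{B}$, to constructions on ordinary alternating automata (no spurious non-Boolean weights appear because in $\mathbb{B}$ the only coefficients are $0$ and $1$, and $0$-coefficient monomials are simply dropped). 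Once that is observed, everything else is a direct reading-off of definitions.
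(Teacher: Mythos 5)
Your main argument is essentially the paper's own proof: instantiate everything at the Boolean semiring, apply Lemma~\ref{lmm:d1WafaIffWfta} (equivalently, one direction of Theorem~\ref{thr:WafaIffWfta}) to write the recognized language as $h^{-1}(L')$ for a regular tree language $L'$ and the generic homomorphism $h=h^r$, and conclude via closure of regular tree languages under inverse tree homomorphisms together with the equivalence of \wfa and \wfta over $\Sigma^1_\#$, the converse direction being trivial since every finite automaton is an alternating automaton. Your alternative one-line derivation from Theorem~\ref{thr:WAFAiffWFAiffLocFin} (as $\mathbb{B}$ is locally finite) is also sound, though it rests on the imported result rather than on the paper's new characterization, which is the point of stating the corollary here.
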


This result goes back to \cite{alternation} and deals with non-weighted automata.
It is well known that the non-weighted setting corresponds to the weighted setting, if weights are taken from the Boolean semiring.
Thus, we may still investigate the non-weighted setting with our methods.
Therefore, we can reproduce this result by using the Boolean semiring.
In doing so, we will get a better understanding of the translation of \wafa into \wfa.
In particular, we are able to identify cases, in which this translation is efficient.

Clearly, every \wfa is a \wafa, hence we only have to concern ourselves with one direction.
First, we observe that $\chr{L} \circ h=\chr{h^{-1}(L)}$ for any tree language $L\subseteq T_\Lambda$ and tree homomorphism $h:T_\Gamma\rightarrow T_\Lambda$.
Assume $\chr{L}$ is recognized by a WAFA. By Lemma \ref{lmm:d1WafaIffWfta}, we get $\chr{L}=\chr{L'}\circ h = \chr{h^{-1}(L')}$ with $h=h^r$ for some $r\in \N$ and $L'\subseteq T_{\Sigma^r_\#}$ regular. Since regular tree languages are closed under inverses of homomorphisms, we know $L = h^{-1}(L')$ is a regular $\Sigma^1_\#$ tree language.
It is also known that \wfa and \wfta are equally expressive over $\Sigma_\#^1$, thus $\chr{L}$ is recognized by a \wfa. The authors in \cite{alternation} also show that the translation of an alternating finite automaton into an equivalent deterministic finite automaton leads to a (worst case) doubly exponential blowup in states. 
By our proof we get a better understanding of where this blowup comes from.
Constructing $\B$ (from the Proof of Lemma \ref{lmm:d1WafaIffWfta}) is linear in states.
However, to construct the tree automaton $\B'$ recognizing $h^{-1}(L)$, we get an exponential blowup in states.
Next, $\B'$ viewed as a \wfa is non-deteministic (even if $\B'$ was bottom up deterministic).
Thus, another exponent is needed to determinize $\B'$.
Immediately, we see that the translation of an alaternating automaton into a non-deterministic finite automaton is only exponential.
Moreover, the first exponent is only needed, if $\B$ is not (bottom up) deterministic.
If a nice Boolean \wafa has only one non-zero final weight and for every pair of states $p,q$ we have $M_{(q,a)} \cap M_{(p,a)} = \emptyset$ for all $a \in \Sigma$, then $\B$ is bottom up deterministic.
Consequently, the translation of alternating automata with this property into a non-deterministic finite automaton is linear in states.

\section{A Nivat theorem for \wafa}
This section leads to the Nivat-like characterization of \wafa (Theorem \ref{thr:nivatWAFA}), but first we will prove that weighted languages recognized by \wafa are closed under inverses of homomorphisms (Corollary \ref{crl:ClIHWafa}), but not under homomorphisms (Lemma \ref{lmm:NClHWafa}).

Let $s_1\odot s_2$ denote the \textit{Hadamard product (pointwise product)} of two weighted languages $s_1,s_2\in \ser{S}{\Sigma^*}$.
Furthermore, a word homomorphism $h:\Gamma^*\rightarrow \Sigma^*$ is called \text{non-deleting} if and only if $h(a)\neq \varepsilon$ for all $a \in \Sigma$.
Let $r \in \ser{S}{\Gamma^*}$.
For $h:\Gamma^*\rightarrow \Sigma^*$ a non-deleting homomorphism, we define $h(r)\in \ser{S}{\Sigma^*}$ by \(h(r)(w)=\sum_{v \in h^{-1}(w)} r(v)\) for all $w \in \Sigma^*$.
This sum is always finite since $h$ is non-deleting.
For $h:\Sigma^*\rightarrow\Gamma^*$ we define $h^{-1}(r)\in \ser{S}{\Sigma^*}$ by \(h^{-1}(r)(w) = r(h(w))\) for all $w \in \Sigma^*$.
In the boolean setting, we have $h(\chr{L}) (w)=1$ if and only if there exists $v\in \Gamma^*$ with $h(v)=w$ and $v\in L$.
Thus, $h(\chr{L})=\chr{h(L)}$. Analogously, we get $h^{-1}(\chr{L}) = \chr{h^{-1}(L)}$. Hence, $h(r)$ corresponds to the application of a homomorphism, while $h^{-1}(r)$ corresponds to the application of the inverse of a homomorphism in the non-weighted setting.

The original Nivat Theorem \cite{Nivat} characterizes word-to-word transducers. A generalized version for \wfa over arbitrary semirings (Theorem 6.3 in \cite{droste2013weighted}) can be stated in the following way:

\begin{thm}[Nivat-like theorem for \wfa \cite{droste2013weighted}]\label{NfWFA} A weighted language $s \in \ser{S}{\Sigma^*}$ is recognized by a \wfa if and only if there exist an alphabet $\Gamma$, a non-deleting homomorphism $h:\Gamma^*\rightarrow \Sigma^*$, a regular language $L \subseteq\Gamma^*$, and a \wfa $\A_w$ with exactly one state such that:\[s = h(\bhv{\A_w} \odot \chr{L}) \enspace.\]
\end{thm}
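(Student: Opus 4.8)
The plan is to prove the two implications of the equivalence separately. The direction $\Leftarrow$ rests on closure properties of recognizable weighted languages, while the direction $\Rightarrow$ is a direct construction that reads off the runs of a given \wfa.

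\textbf{The direction $\Leftarrow$.} Suppose $s = h(\bhv{\A_w}\odot\chr{L})$. I would show that $s$ is recognized by a \wfa in four routine steps. First, $\chr{L}$ is recognizable because $L$ is regular: take a deterministic finite automaton for $L$ and view it as a \wfa by putting weight $1$ on every transition, on the initial state and on every accepting state, and weight $0$ everywhere else. Second, $\bhv{\A_w}$ is recognizable by hypothesis. Third, the Hadamard product of two recognizable weighted languages over the commutative semiring $S$ is recognizable, via the usual product construction on the two \wfa (the product automaton has state set $Q_1\times Q_2$, and its initial, transition and final weights are the componentwise products of those of the factors); commutativity of $S$ is exactly what makes the weights of runs multiply correctly. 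Fourth, the image of a recognizable weighted language under the non-deleting homomorphism $h$ is again recognizable: in a \wfa for the argument series, replace each transition reading a letter $b$ with weight $c$ by a chain of $|h(b)|$ new transitions spelling out $h(b)$, carrying weight $c$ on the first transition of the chain and $1$ on the others; non-deletingness guarantees $|h(b)|\geq 1$, so no $\varepsilon$-transitions are created. Composing these four facts produces a \wfa recognizing $s$.

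\textbf{The direction $\Rightarrow$.} Let $\A = \waut$ be a \wfa with $\bhv{\A} = s$. Ignoring the empty word (treated separately below), a standard normalization lets me assume that $\A$ has a unique initial state $q_{\iota}$ with $\In(q_{\iota}) = 1$ and $\In(q) = 0$ for $q\neq q_{\iota}$, and a unique final state with final weight $1$, the original initial and final weights having been pushed onto the first, respectively last, transitions. Put $\Gamma = Q\times\Sigma\times Q$ and define $h\colon\Gamma^*\to\Sigma^*$ by $h\big((p,a,q)\big) = a$; this $h$ is non-deleting. Let $L\subseteq\Gamma^*$ consist of the words $(q_0,a_1,q_1)(q_1,a_2,q_2)\cdots(q_{n-1},a_n,q_n)$ with $n\geq 1$ that are \emph{consistent} (the last component of each letter equals the first component of the following letter), start with $q_0 = q_{\iota}$, and end in the final state; this language is clearly regular. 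Finally, let $\A_w$ be the one-state \wfa over $\Gamma$ whose state $r$ has initial and final weight $1$ and whose self-loop on a letter $(p,a,q)$ carries weight $\Wt(a)(p,q)$. For a nonempty word $w = a_1\cdots a_n$, the words $v\in h^{-1}(w)\cap L$ are in weight-preserving bijection with the runs of $\A$ over $w$: the word $(q_0,a_1,q_1)\cdots(q_{n-1},a_n,q_n)$ corresponds to the run $q_0,q_1,\ldots,q_n$, and its value under $\bhv{\A_w}$ is $\prod_{i=1}^{n}\Wt(a_i)(q_{i-1},q_i)$, which is exactly the weight of that run (the initial and final factors being $1$ after normalization). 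Summing over all such $v$ gives $\bhv{\A}(w)$, i.e.\ $h(\bhv{\A_w}\odot\chr{L})(w) = s(w)$. The empty word is dispatched by deciding whether $\varepsilon\in L$ and, if $s(\varepsilon)\neq 0$, by absorbing $s(\varepsilon)$ into one of the two scalars of $\A_w$.

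\textbf{Expected main obstacle.} The one genuinely delicate point is making the construction in the direction $\Rightarrow$ carry the initial and final weights of $\A$ faithfully: a one-state automaton multiplies every accepted word by a single fixed scalar, so these weights must be folded into the alphabet rather than into $\A_w$ --- concretely, by tagging the first and last letters of each run-word and weighting a tagged first letter $(q_0,a_1,q_1)$ by $\In(q_0)\cdot\Wt(a_1)(q_0,q_1)$ and a tagged last letter $(q_{n-1},a_n,q_n)$ by $\Wt(a_n)(q_{n-1},q_n)\cdot\Out(q_n)$, adapting $L$ accordingly --- with the empty word requiring a little extra care. The remaining ingredients, namely regularity of $L$, non-deletingness of $h$, the run/word bijection, and the three closure steps of the converse, are routine.
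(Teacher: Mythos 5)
Your overall route coincides with the proof the paper relies on from \cite{droste2013weighted} (and with the construction the paper spells out in detail for the tree analogue, Theorem \ref{thr:nivatWFTA}): the direction $\Leftarrow$ via closure of \wfa-recognizable series under characteristic series of regular languages, Hadamard products, and images under non-deleting homomorphisms, and the direction $\Rightarrow$ by letting $\Gamma$ encode the transitions of $\A$, $L$ the regular language of consistent accepting run words, $h$ the projection to $\Sigma$, and $\A_w$ the one-state automaton carrying the transition weights. For nonempty words your argument is sound, including your observation that the initial and final weights must be pushed onto transitions or folded into tagged first/last letters, since a one-state $\A_w$ only contributes one global scalar.

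The genuine gap is the empty word. Because $h$ is non-deleting, $h^{-1}(\varepsilon)=\{\varepsilon\}$, so $h(\bhv{\A_w}\odot\chr{L})(\varepsilon)$ is the product of the two scalars of $\A_w$ if $\varepsilon\in L$ and $0$ otherwise; your proposed fix of ``absorbing $s(\varepsilon)$ into one of the two scalars of $\A_w$'' fails, since that scalar multiplies the value of \emph{every} word, not only of $\varepsilon$, and thus destroys the values you just established on $\Sigma^+$. Moreover no choice of scalars can repair this patch in general: over $(\N,+,\cdot)$ take $s$ with $s(\varepsilon)=2$, $s(a)=1$ and $s(w)=0$ otherwise (recognizable by a small \wfa); matching $s(\varepsilon)$ forces the scalar product to be $2$, and then every value of $h(\bhv{\A_w}\odot\chr{L})$ on $\Sigma^+$ is a sum of multiples of $2$, contradicting $s(a)=1$. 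So the empty word cannot be handled inside $\A_w$ at all and needs separate treatment at the level of the statement or convention; this subtlety is specific to the word case (the tree analogue has no ``empty tree''), and your proof as written establishes the claimed equality only on $\Sigma^+$.
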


Please note, $\A_w$ does not depend on any input and is called $\A_w$ since it is responsible for the application of weights. Our goal is to generalize this result to \wafa. This Nivat-like theorem is strongly connected to the closure of weighted languages recognized by \wfa under (inverses) of homomorphisms. Thus, we will investigate these properties for \wafa.

\subsection{Closure properties}
A class $K$ of $S$-weighted languages is said to be \textit{closed under homomorphisms} if $s \in \ser{S}{\Gamma^*}\cap K$  and $h:\Gamma^*\rightarrow \Sigma^*$ a non-deleting homomorphism implies $h(s) \in \ser{S}{\Sigma^*} \cap K$. Moreover, $K$ is \textit{closed under inverses of homomorphisms} if $s' \in \ser{S}{\Gamma^*} \cap K$ and $h:\Sigma^*\rightarrow \Gamma^*$ a homomorphism implies $h^{-1}(s') \in \ser{S}{\Sigma^*} \cap K$. The same notions are used for weighted tree languages.

The class of weighted languages recognized by \wfa is closed under (inverses of) homomorphisms (Lemma 6.2 in \cite{droste2013weighted}). \wafa are also closed under inverses of homomorphisms. In fact, this is an easy corollary of Lemma \ref{lmm:d1WafaIffWfta} and Lemma \ref{lmm:d2WafaIffWfta}.

\begin{cor}\label{crl:ClIHWafa} The class of weighted languages recognized by \wafa is closed under inverses of homomorphisms.
\end{cor}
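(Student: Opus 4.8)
The plan is to deduce this directly from the characterization Theorem~\ref{thr:WafaIffWfta}. Let $h\colon\Sigma^{*}\to\Gamma^{*}$ be a word homomorphism and let $s'\in\ser{S}{\Gamma^{*}}$ be recognized by a \wafa; we must show that $h^{-1}(s')$, defined by $h^{-1}(s')(w)=s'(h(w))$, is recognized by a \wafa as well. By Theorem~\ref{thr:WafaIffWfta} there are a ranked alphabet $\Delta$, a tree homomorphism $g\colon\Gamma^{*}\to T_{\Delta}$, and a \wfta $\B$ with $s'=\bhv{\B}\circ g$ (here, as in the theorem, $\Gamma^{*}$ is identified with $T_{\Gamma_{\#}^{1}}$). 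Consequently, for every $w\in\Sigma^{*}$ we have $h^{-1}(s')(w)=s'(h(w))=\bhv{\B}\bigl(g(h(w))\bigr)$. So it suffices to exhibit a tree homomorphism $\tilde h\colon\Sigma^{*}\to T_{\Delta}$ with $\tilde h(w)=g(h(w))$ for all $w\in\Sigma^{*}$: then $h^{-1}(s')=\bhv{\B}\circ\tilde h$, and applying Theorem~\ref{thr:WafaIffWfta} in the other direction (equivalently, Lemma~\ref{lmm:d2WafaIffWfta}) yields that $h^{-1}(s')$ is recognized by a \wafa.

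To build $\tilde h$, I would first record that the word homomorphism $h$ is itself a tree homomorphism once words are viewed as unary trees: define $\bar h\colon T_{\Sigma_{\#}^{1}}\to T_{\Gamma_{\#}^{1}}$ by $\bar h(\#)=\#$ and, for $a\in\Sigma$ with $h(a)=b_{1}\cdots b_{m}$, by $\bar h(a)=b_{1}(b_{2}(\cdots b_{m}(x_{1})\cdots))\in T_{\Gamma_{\#}^{1}}[X_{1}]$ (so $\bar h(a)=x_{1}$ if $h(a)=\varepsilon$). A straightforward induction on $|w|$ then gives $\bar h(t^{1}_{w})=t^{1}_{h(w)}$, i.e.\ $\bar h$ agrees with $h$ under the identifications $\Sigma^{*}\cong T_{\Sigma_{\#}^{1}}$ and $\Gamma^{*}\cong T_{\Gamma_{\#}^{1}}$. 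I would then set $\tilde h=g\circ\bar h$ and invoke the standard fact that the composition of two tree homomorphisms is again a tree homomorphism (which follows from the substitution lemma, by induction on tree depth). Hence $\tilde h\colon\Sigma^{*}\to T_{\Delta}$ is a tree homomorphism with $\tilde h(w)=g(\bar h(w))=g(h(w))$ for all $w\in\Sigma^{*}$, as required, completing the argument.

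Most of this is routine bookkeeping, and — as already remarked in the text — it is an easy corollary; the one point deserving attention rather than a genuine obstacle is the encoding of words as unary trees, and in particular the case of a \emph{deleting} letter $h(a)=\varepsilon$, which forces $\bar h(a)=x_{1}$. This causes no trouble, because a tree homomorphism is permitted to send a rank-one symbol to the tree $x_{1}\in T_{\Gamma_{\#}^{1}}[X_{1}]$ and no non-deletingness is imposed on the tree homomorphisms appearing in Theorem~\ref{thr:WafaIffWfta} or Lemma~\ref{lmm:d2WafaIffWfta}; thus $\bar h$, and hence $\tilde h$, is a legitimate tree homomorphism and the reasoning goes through unchanged. (Alternatively, one may use the concrete form of Theorem~\ref{thr:WafaIffWfta} coming from Lemma~\ref{lmm:d1WafaIffWfta}, taking $g=h^{r}$ the generic tree homomorphism and $\B$ over $\Gamma_{\#}^{r}$; the composition $h^{r}\circ\bar h$ is again a tree homomorphism and the identical conclusion follows.)
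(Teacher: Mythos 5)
Your proposal is correct and follows essentially the same route as the paper: obtain $s'=\bhv{\B}\circ g$ from Lemma~\ref{lmm:d1WafaIffWfta} (equivalently Theorem~\ref{thr:WafaIffWfta}), compose $g$ with the word homomorphism viewed as a tree homomorphism on unary-tree encodings, and conclude via Lemma~\ref{lmm:d2WafaIffWfta}. The only difference is that you spell out (correctly) the step the paper dismisses with ``clearly'', namely that the composite is again a tree homomorphism, including the deleting case $h(a)=\varepsilon$.
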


\begin{proof} Let $h':\Lambda^*\rightarrow \Sigma^*$ be a homomorphism and $s\in \ser{S}{\Sigma^*}$ recognized by a \wafa.
  Due to Lemma \ref{lmm:d1WafaIffWfta}, we get $s=\bhv{\B}\circ h^r$ for the generic homomorphism $h^r: \Sigma^* \rightarrow \Sigma_\#^r$ and some weighted $\Sigma_\#^r$-\wfta $\B$.
  Since homomorphisms are closed under composition, $h^r\circ h':\Lambda^*\rightarrow \Sigma^r_\#$ is a tree homomorphism.
  Thus, due to Lemma \ref{lmm:d2WafaIffWfta}, ${h'}^{-1}(s)=(\bhv{\B}\circ h^r)\circ h'=\bhv{\B}\circ (h^r\circ h')$ is recognized by a \wafa.
\end{proof}

However, the same is not true for the closure under homomorphisms.

\begin{lem}\label{lmm:NClHWafa} The class of weighted languages recognized by \wafa is not closed under homomorphisms.
\end{lem}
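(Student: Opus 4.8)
The plan is to exhibit a concrete weighted language $s$ that is recognized by a \wafa together with a homomorphism $h$ such that $h(s)$ is \emph{not} recognized by any \wafa. The natural candidate for $s$ is a weighted language whose growth rate is genuinely doubly exponential in the input length — something like the language from Example \ref{xmp:serp}, where $\bhvs{\A}$ on $a^i b^j$ evaluates to $(2^j)^{2^i}$ — suitably padded so that the doubly exponential behaviour is "visible" after collapsing part of the alphabet. The point is that applying a homomorphism that erases a letter (or merges two letters) can turn a controlled doubly-exponential series into one whose value on a word of length $n$ is something like $2^{2^n}$ with $n$ the \emph{full} length, and I want to argue this cannot be the behaviour of any \wafa.

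First I would fix the semiring $S=(\N,+,\cdot,0,1)$ and recall, from Theorem \ref{thr:WafaIffWfta}, that a weighted language $t\in\ser{S}{\Lambda^*}$ recognized by a \wafa factors as $t=\bhv{\A}\circ h^r$ for some \wfta $\A$ over $\Lambda^r_\#$ and some $r$. So I would establish a \emph{growth bound}: if $t$ is recognized by a \wafa, then there is a constant $c$ (depending only on $\A$ and $r$) with $t(w)\le c^{\,r^{|w|}}$ for all $w$, because the run tree $t^r_w$ has exactly $\frac{r^{|w|+1}-1}{r-1}$ nodes (or $r^{|w|}$ leaves in the relevant layer), and $\bhvs{\A}$ is a sum of products of transition weights over such trees, each product bounded by (max transition weight)$^{\#\text{nodes}}$, with the number of runs also bounded by a constant to the number of nodes. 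This gives an upper bound of the form $t(w)\le 2^{\mathcal{O}(r^{|w|})}$, i.e. the iterated exponent of the length is \emph{bounded}: a \wafa-recognizable series grows at most like $2^{(\text{const})^{n}}$, never like $2^{2^{n}}$.

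Then I would pick $s$ recognized by a \wafa over an alphabet $\Gamma$ — concretely a variant of Example \ref{xmp:serp} reading words of the form $a^i \# b^j$ (or $a^n b^n$) and producing $(2^{n})^{2^{n}} = 2^{n\cdot 2^n}$, which is fine since on input of length $\Theta(n)$ this is still only $2^{\mathcal{O}(2^{n})}$, consistent with the bound — and a homomorphism $h:\Gamma^*\to\Sigma^*$ which, say, maps $a\mapsto a$ and $b\mapsto a$, collapsing the two-letter structure. The image $h(s)$ will then assign to the single word $a^{2n}$ a value which is a \emph{sum} over all preimages $v$ with $h(v)=a^{2n}$; choosing the construction so that the dominant preimage contributes $2^{2^n}$ while the word length is $2n$, we get $h(s)(a^m)\ge 2^{2^{m/2}}$, which violates the growth bound from the previous step. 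Hence $h(s)$ is not \wafa-recognizable.

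The main obstacle is getting the growth bound exactly right and, more importantly, arranging the homomorphism and the source language so that the collapse genuinely produces an $n\mapsto 2^{2^{n}}$-type growth from an $n\mapsto 2^{\mathcal{O}(2^{n})}$-type source — the danger is that a too-naive $s$ already has length comparable to $2^n$ inside the padding, so that after $h$ the "doubly exponential in the length" effect disappears. The right move is to let $s$ depend on \emph{two} separate counters $i$ and $j$ (so $s(a^i b^j)=2^{j\cdot 2^i}$ as in Example \ref{xmp:serp}, which is \wafa-recognizable with input length $i+j$) and then let $h$ merge $a$ and $b$; the preimage $a^n b^n$ of $a^{2n}$ contributes $2^{n\cdot 2^n}$, and since $h(s)(a^{2n})\ge 2^{n\cdot 2^n} \ge 2^{2^n}$ while any \wafa-recognizable series is at most $2^{c^{2n}}=2^{(c^2)^n}$, a single exponential in $n$, we obtain the desired contradiction. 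I would close by noting this also shows, via Lemma \ref{lmm:d1WafaIffWfta}, that \wfta-recognizable weighted tree languages are not closed under (the tree analogue of) images of homomorphisms in general.
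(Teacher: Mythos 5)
There is a genuine gap: the growth-bound strategy cannot work, and the specific contradiction you claim rests on an arithmetic slip. Your own bound says a \wafa-recognizable series $t$ satisfies $t(w)\le c^{\,r^{|w|}}=2^{\mathcal{O}(r^{|w|})}$, which is \emph{doubly} exponential in $|w|$, and this rate is actually attained by \wafa: the automaton with $\delta(q,a)=q^2$ and $\tau(q)=2$ realizes $a^m\mapsto 2^{2^m}$ (this is essentially Example \ref{xmp:serp}). So the statement that a \wafa-recognizable series ``grows at most like $2^{(\mathrm{const})^n}$, never like $2^{2^n}$'' and that $2^{c^{2n}}=2^{(c^2)^n}$ is ``a single exponential in $n$'' is wrong, and the final inequality fails: $(c^2)^n$ dominates $n\cdot 2^n$ as soon as $c^2>2$, so $h(s)(a^{2n})\ge 2^{n\cdot 2^n}$ does not violate the bound $2^{c^{2n}}$. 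Worse, the obstruction is structural: for a non-deleting homomorphism $h$ (which is what the paper's definition of $h(s)$ requires), every preimage of a word of length $m$ has length at most $m$ and there are at most $\sum_{k\le m}|\Gamma|^k$ of them, so $h(s)(w)\le |\Gamma|^{\mathcal{O}(|w|)}\cdot 2^{\mathcal{O}(r^{|w|})}=2^{\mathcal{O}(r'^{|w|})}$ — the doubly exponential growth envelope is itself closed under taking homomorphic images. Hence no growth-rate argument can exhibit non-closure; you would still owe a proof that your candidate image series is not \wafa-recognizable, and that is exactly the hard part your proposal leaves unproved (it is in fact unclear whether $\sum_{i+j=m}2^{j\cdot 2^i}$ fails to be \wafa-recognizable at all).

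By contrast, the paper avoids quantitative growth entirely: it works over the semiring $\mathbb{B}[x]$ of Boolean polynomials, takes the series $r_B(a^i\#b^j)=\sum_{k=0}^{j}x^{ki}$, which is known \emph{not} to be \wafa-recognizable by Lemma 8.3 of \cite{KOSTOLANYI20181} (a structural argument specific to that semiring), and exhibits it as $h(r_R)$ for a non-deleting homomorphism merging two letters, where $r_R(a^i\#c^kd^l)=x^{ki}$ is recognized by an explicit five-state \wafa. If you want to salvage your approach, you would need to replace the growth bound by a genuine non-recognizability criterion for the image series — for instance by importing such a result from \cite{KOSTOLANYI20181} as the paper does — rather than by counting magnitudes over $\N$.
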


\begin{proof} Let $\Sigma=\{a,b,\#\}$, $\mathbb{B}$ the Boolean semiring, and $\mathbb{B}[x]$ the semiring of polynomials in one indeterminate. Consider \[r_B: \Sigma^*\rightarrow \mathbb{B}[x]: w \mapsto \begin{cases}\sum\limits_{k=0}^{j} x^{ki} &\text{if } w=a^i\#b^j \enspace, \\ 0 &\text{otherwise.}\end{cases}\] Due to Lemma 8.3 from \cite{KOSTOLANYI20181}, we know $r_B$ is not recognized by a \wafa. Let $\Gamma=\{a,c,d,\#\}$, $h:\Gamma^*\rightarrow \Sigma^*$ the non-deleting homomorphism induced by $h(a)=a,h(\#)=\#, h(c)=h(d)=b$, and \[r_R: \Gamma^*\rightarrow \mathbb{B}[x]:w \mapsto \begin{cases} x^{ki} &\text{if } w=a^i\#c^kd^l \enspace, \\ 0 &\text{otherwise.}\end{cases}\] Then 
\[\begin{array}{llll}
   &h(r_R)(w) & = &\sum\limits_{v\in h^{-1}(w)} r_R(v)\\
  =&\begin{cases}\sum\limits_{v\in h^{-1}(w)} r_R(v) &\text{if } w=a^i\#b^j \\ 0 &\text{otherwise}\end{cases}
   &=&\begin{cases}\sum\limits_{k=0}^{j} r_R(a^i\#c^kd^{j-k}) &\text{if } w=a^i\#b^j \\ 0 &\text{otherwise}\end{cases}\\
  =&\begin{cases}\sum\limits_{k=0}^{j} x^{ki} &\text{if } w=a^i\#b^j \\ 0 &\text{otherwise}\end{cases} 
   &=&r_B(w)
\end{array}\]
for all $w \in \Sigma^*$.
Thus, $h(r_R)$ is not recognized by a \wafa. The weighted language $r_R$ is recognized by the \wafa $\A_R =(\{q_\iota,q_1,q_a,q_c,q_d\},\{a,\#,c,d\},\delta_R,q_\iota,\tau_R)$ with $\delta_R$ and $\tau_R$ defined by:
\[\begin{array}{l|c|c|c|c|c}
&q_\iota &q_1 &q_a& q_c&q_d\\
\hline
\delta_R(*,a)&q_\iota q_a &0 &q_a &0 &0\\
\hline
\delta_R(*,\#)&q_1 &0 &q_c &0 &0\\
\hline
\delta_R(*,c)&0 &q_1 &0 &x\cdot q_c &0\\
\hline
\delta_R(*,d)&0 &q_d &0 &q_d &q_d\\
\hline
\tau_R(*)&0 &1 &0 &1 &1\\
\end{array}
\]
A depiction of $\A_R$ can be seen below. This completes our proof.
\[
\begin{tikzpicture}[every edge/.style={main edge},
          every loop/.style={main edge}, 
          every initial by arrow/.style={main edge, initial distance= 10pt},
          every accepting by arrow/.style={main edge, accepting distance= 10pt}]
\def\x{3}
\def\y{2}
            
            \node[main node,initial by arrow, initial where=left, initial text=] (qio) at (-2*\x,0*\y) {$q_\iota$};
            \begin{scope}[
                every accepting by arrow/.append style={
                  every node/.append style={
                    alias=q1accepting, 
                    overlay, 
                  }
                },
              ]
            \node[main node,
              accepting by arrow, accepting where= below, accepting text= ,
              ] (q1) at (-0.5*\x,-0.7*\y) {$q_1$};
            \end{scope}
            \node[main node] (qa) at (-1*\x,0*\y) {$q_a$};
            \node[main node, accepting by arrow, accepting where= below, accepting text= ] (qc) at (0*\x,0*\y) {$q_c$};
            \node[main node, accepting by arrow, accepting where= below, accepting text= ] (qd) at (1*\x,0*\y) {$q_d$};

            \path[main edge] (qio) edge[out=-45, in=180] node[edge label, above] {$\#$} (q1)
                  (qio) edge[out=45, in=135] node[edge label, above, xshift=-25pt] {$a$} (qa)
                  (qa) edge node[edge label, above] {$\#$} (qc)        
                  (qc) edge node[edge label, above] {$d$} (qd)
                  (qio) edge[loop, out=45, in=90, looseness=7] (qio)
                  (qa) edge [loop above] node[edge label, above] {$a$} (qa)
                  (qc) edge[loop above] node[edge label, above] {$c$:$x$} (qc)
                  (qd) edge[loop above] node[edge label, above] {$d$} (qd)
                  (q1) edge[loop above] node[edge label, above] {$c$} (q1)
                  (q1) edge[out=0, in=-135] node[edge label, above] {$d$} (qd);
            \draw[draw= grau2, fill=grau2] ([rotate around={45:(qio)}]qio.east) circle (1.5pt);
\end{tikzpicture}
\tag*{\qedhere}
\]
\end{proof}

Nonetheless, the proof of the second direction of Theorem \ref{NfWFA} relies on the closure under homomorphisms. Thus, due to Lemma \ref{lmm:NClHWafa}, a one to one translation of Theorem \ref{NfWFA} into the framework of alternating automata is prohibited. Moreover, in the proof of the first direction of Theorem \ref{NfWFA}, $L$ is defined as a language of runs of $\A$. As mentioned above, runs of \wafa are trees. Therefore, we will utilize a Nivat-like theorem for \wfta to prove the corresponding result for \wafa.

\subsection{A Nivat-like characterization of \wfta}
Nivat-like characterizations for weighted tree languages have been investigated in the past. Unranked trees were considered in \cite{DBLP:conf/birthday/DrosteG17}, while a very general result for graphs can be found in \cite{10.1007/978-3-662-48057-1_15}. Here, for the readers convenience, we want to restate a more restricted version for ranked trees. 

Let $h:\tlg\rightarrow T_\Lambda$ be a non-deleting tree homomorphism, $s\in \ser{S}{T_\Lambda}$. In analogy to words, we define $h(s) \in \ser{S}{T_\Gamma}$ by \(h(s)(t) = \sum_{t' \in h^{-1}(t)} s(t')\) for all $t \in T_\Gamma$. For linear homomorphisms the following is known: 

\begin{lemC}[{\cite[Theorem 3.8]{HWAc9}}]\label{lmm:TcluioH} The class of weighted tree languages recognized by \wfta is closed under linear homomorphisms.
\end{lemC}

Based on this, it is easy to prove the following result:

\begin{thm}[{Nivat-like theorem for \wfta \cite[Theorem 12]{DBLP:conf/birthday/DrosteG17}}]\label{thr:nivatWFTA} A weighted tree language $s \in \ser{S}{T_\Gamma}$ is recognized by a \wfta if and only if there exist a ranked alphabet $\Lambda$, a linear tree homomorphism $h:T_\Lambda\rightarrow T_\Gamma$, a regular tree language $L \subseteq T_\Lambda$, and a \wfta $\A_w$ with exactly one state such that:\[s = h(\bhv{\A_w} \odot \chr{L}) \enspace.\]
\end{thm}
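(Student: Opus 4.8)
The plan is to prove both directions by moving between tree automata and the "pointwise product with a regular language" picture. For the nontrivial direction ($\Rightarrow$), assume $s\in\ser{S}{T_\Gamma}$ is recognized by a \wfta $\A=\wftaa$. First I would record a normal form for $\A$ in which each transition weight is "atomic", i.e. there is a ranked alphabet $\Lambda$ together with a projection-like linear tree homomorphism $h:T_\Lambda\rightarrow T_\Gamma$ that forgets the annotation but keeps the $\Gamma$-symbol at each node; concretely take $\Lambda^{(k)}$ to consist of pairs $(g,\theta)$ where $g\in\Gamma^{(k)}$ and $\theta=(\overbar{p},q)\in Q^k\times Q$ is a transition instance, with $h((g,\theta))=g(x_1,\dots,x_k)$. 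Since each such $h((g,\theta))$ is linear in $k$ variables, $h$ is a linear tree homomorphism. Next I would define the regular tree language $L\subseteq T_\Lambda$ to be the set of "consistent run trees": those $\Lambda$-trees in which the target-state component of the transition at each node matches the source-state components read off from its children, and the root's target state ranges over all states; a finite (bottom-up) tree automaton checks this, so $L$ is regular. Finally I would define the single-state \wfta $\A_w$ over $\Lambda$ so that its unique transition at $(g,\theta)$ carries exactly the weight $\delta_g(\overbar{p},q)$ (for $\theta=(\overbar{p},q)$) and its root weight carries $\lambda$ of the root target state; because $\A_w$ has one state, $\bhv{\A_w}(t')$ is just the product of the transition weights appearing in $t'$ — which for $t'\in L$ is precisely the weight of that run of $\A$, and for $t'\notin L$ the $\chr{L}$ factor kills it. Summing $\bhv{\A_w}\odot\chr{L}$ over the fiber $h^{-1}(t)$ then reproduces exactly the state-behavior sum defining $\bhv{\A}(t)$, which is the claimed identity $s=h(\bhv{\A_w}\odot\chr{L})$.

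For the converse ($\Leftarrow$), suppose $s=h(\bhv{\A_w}\odot\chr{L})$ with $h:T_\Lambda\to T_\Gamma$ linear, $L\subseteq T_\Lambda$ regular, and $\A_w$ single-state over $\Lambda$. The weighted tree language $\bhv{\A_w}$ is recognized by a \wfta by hypothesis, and $\chr{L}$ is recognized by a \wfta since regular tree languages are recognizable with Boolean (hence $\{0,1\}\subseteq S$) weights; the Hadamard product of two \wfta-recognizable weighted tree languages is again \wfta-recognizable by the standard product construction on state sets. Thus $\bhv{\A_w}\odot\chr{L}$ is recognized by some \wfta $\A'$, and then $h(\bhv{\A_w}\odot\chr{L})=h(\bhv{\A'})$ is \wfta-recognizable by Lemma~\ref{lmm:TcluioH} (closure under linear homomorphisms). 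Hence $s$ is recognized by a \wfta.

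The main obstacle I anticipate is the bookkeeping in the forward direction: one must check that the set $L$ of consistent run trees is genuinely regular (this is routine but needs care because "consistency" is a local condition on each node together with the children it annotates), and, more importantly, that pushing the sum through the linear homomorphism $h$ matches the \emph{recursive} sum-of-products defining $\bhvs{\A}$. Concretely, the identity $\bhv{\A}(t)=\sum_{t'\in h^{-1}(t)}\bhv{\A_w}(t')$ should be obtained by first proving, by structural induction on $t$, the stronger statement that for every $q\in Q$, $\bhvs{\A}_q(t)=\sum_{t'}\prod(\text{transition weights of }t')$ where $t'$ ranges over $h$-preimages of $t$ whose root is labeled by a transition with target $q$ and whose proper subtrees are themselves consistent; only after this is established does summing over the root target state against $\lambda$ give the behavior. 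A secondary, purely notational point is that $h$ as defined above is linear but \emph{not} non-deleting only if some $\Gamma$-symbol has rank $0$ — at rank $0$, $h((g,\theta))$ is a constant, which is fine — so one must make sure the hypotheses of Lemma~\ref{lmm:TcluioH} (linearity, not non-deletion) are exactly what is used, which they are.
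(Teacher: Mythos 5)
Your overall strategy matches the paper's proof: annotate each $\Gamma$-symbol with a transition instance, let $h$ be the linear projection back to $\Gamma$, let $L$ be the regular language of locally consistent run trees, let the one-state automaton $\A_w$ carry the transition weights, and establish the identity by a structural induction proving first a state-wise version for $\bhvs{\A}_q$ (this is exactly the paper's Claim \ref{clm:aux2}); the converse via recognizability of $\chr{L}$, the Hadamard product construction, and closure under linear homomorphisms (Lemma \ref{lmm:TcluioH}) is also the intended argument. However, there is a genuine gap in your forward direction: the root weights $\lambda$ of $\A$ cannot be installed as ``the root weight of $\A_w$ at the root target state''. Since $\A_w$ has exactly one state, its root weight function is a single constant of $S$ and cannot depend on the transition labelling the root of the input tree $t'$. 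With your alphabet of letters $(g,\overbar{p},q)$ and $\A_w$ assigning only the weights $\delta_g(\overbar{p},q)$, summing $\bhv{\A_w}\odot\chr{L}$ over $h^{-1}(t)$ yields $\lambda_w(q_w)\cdot\sum_{q\in Q}\bhvs{\A}_q(t)$ rather than $\sum_{q\in Q}\lambda(q)\cdot\bhvs{\A}_q(t)$, so the construction as written computes the wrong value whenever $\lambda$ is not constant.

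The missing idea is to make root occurrences syntactically distinguishable. The paper draws the state component of the annotation from $P=Q\cup Q_{\text{fin}}$, where $Q_{\text{fin}}$ is a marked copy of $Q$: the Boolean automaton $\B$ defining $L$ has root weight $\chr{Q_{\text{fin}}}$, so consistent run trees must carry a marked letter exactly at the root, and $\A_w$ weights a marked letter $[\overbar{q},g,q'_{\text{fin}}]$ by $\alpha_g(\overbar{q},q')\cdot\lambda(q')$ while unmarked letters get $\alpha_g(\overbar{q},q')$; thus $\lambda$ is multiplied in exactly once, at the root. (Alternatively, you could first normalize $\A$ so that its root weight is the characteristic function of a fresh state that never occurs as a child; then your construction goes through with root weight $1$ for $\A_w$ and $L$ additionally requiring that root target state.) Your closing aside about non-deletion is moot under this paper's conventions: linearity of a tree homomorphism here includes non-deletion, and the projection $h$, which sends a rank-$0$ annotated letter to the single symbol $g$, is both linear and non-deleting, as required for $h(\cdot)$ and Lemma \ref{lmm:TcluioH}.
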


\begin{proof}  A proof can be found in \cite{DBLP:conf/birthday/DrosteG17}. There, the $\Rightarrow$-direction is proved based on a \wfta $\A$ recognizing $s$. The components $\Lambda$, $h$, $L$, and $\A_w$ are chosen to be the set of transitions in $\A$, the mapping assigning each transition $(\overbar{p}, \gamma, q)$ to the corresponding letter $\gamma \in \Gamma$, the tree language of runs in $\A$, and an automaton adding weights to the transitions, respectively. This yields the desired equation. Here, for the convenience of the reader, we want to give a precise construction, as well as a proof of correctness for the second direction of the proof of Theorem \ref{thr:nivatWFTA}.

Let $s$ be recognized by a \wfta $\A=(Q,\Gamma,\alpha,\lambda)$. First, we define $\Lambda$. Let $P=Q\cup Q_{\text{fin}}$ where $Q_{\text{fin}}$ is a disjoint copy of $Q$ including an element $q_{\text{fin}}$ for each $q\in Q$. Now, $\Lambda= \bigcup_{i=0}^{\rank(\Gamma)} Q^i\times \Gamma^{(i)} \times P$. For reasons of readability, we sometimes abbreviate a letter $\big((q_1,\ldots,q_{\rank(g)}),g,p\big)\in \Lambda$ by $[\overbar{q},g,p]$. We define $\rank([\overbar{q},g,p])=\rank(g)$. Let $h$ be the tree homomorphism induced by $h\big([\overbar{q},g,p](x_1,\ldots,x_k)\big)=g(x_1,\ldots,x_k)$ for all $g\in \Gamma^{(k)}$. Clearly, $h$ is linear and non-deleting.

Next, we consider the \wfta $\B$ with weights taken from the Boolean semiring $\mathbb{B}$. Let $\B=(P, \Lambda, \beta,\chr{Q_\text{fin}})$ with $\beta$ defined by \[\beta_{[\overbar{q},g,p]}(\overbar{q}',p')=\begin{cases}1 &\text{if } [\overbar{q}',g,p']=[\overbar{q},g,p]\enspace,\\ 0 &\text{otherwise}\end{cases}\] for all $[\overbar{q},g,p]\in \Lambda$, $(\overbar{q}',p')\in Q^{k+1}$. Clearly, $\bhv{\B}=\chr{L}$, for some tree language $L\subseteq T_\Lambda$. It is well known, that the support of a recognizable $\mathbb{B}$-weighted weighted tree language is a recognizable tree language (cf. Lemma 3.11 \& Theorem 3.12 in \cite{HWAc9}). Hence, $L$ is a recognizable tree language.

Finally, we define the \wfta $\A_w=(\{q_w\}, \Lambda, \omega, \chr{\{q_w\}})$ with \[\omega_{[\overbar{q},g,p]}\big((\underbrace{q_w,\ldots, q_w}_{\rank(g) \text{ times}}),q_w\big)=\begin{cases}\alpha_g(\overbar{q},q')&\text{if } p=q'\in Q \\ \alpha_g(\overbar{q},q')\cdot\lambda(q') &\text{if } p=q'_{\text{fin}} \in Q_{\text{fin}}\end{cases}\] for all $[\overbar{q},g,p]\in \Lambda$. Please note that $\bhv{\A_w}=\bhvs{\A_w}_{q_w}$. Thus, we will not distinguish between those two semantics.

To prove that \[\bhv{\A} = h^{-1}\circ(\bhv{\A_w} \odot \chr{L})\] holds for this construction, we will prove the following claim first.

\pagebreak[2]
\begin{clm}\label{clm:aux2}
  For all $q\in Q$ it holds $\bhvs{\A}_q(t) = \big(h^{-1}\circ(\bhv{\A_w} \odot \bhvs{\B}_q)\big)(t)$ for all
  $t\in T_\Gamma$.
\end{clm}
\begin{proof} The proof is by induction on the depth of $t$. If $t=g$ for some $g\in \Gamma^{(0)}$, we get \[\begin{array}{ll} &\bhvs{\A}_q(t)\\ = &\alpha_g(\varepsilon,q) + 0\\ = &\sum\limits_{p\in Q}\big(\alpha_g(\varepsilon,p)\cdot \chr{\{q\}}(p)\big) + \sum\limits_{p_{\text{fin}}\in Q_{\text{fin}}}\big(\alpha_g(\varepsilon,p)\cdot\lambda(p)\cdot \chr{\{q\}}(p_{\text{fin}})\big)\\=&\sum\limits_{[\varepsilon,g,p]\in \Lambda^{(0)}}\bhv{\A}_w([\varepsilon,g,p])\cdot \bhvs{\B}_q([\varepsilon,g,p])\\=&\sum\limits_{t'\in h^{-1}(g)}\big(\bhv{\A}_w\odot \bhvs{\B}_q\big)(t')\\=&\big(h^{-1}\circ(\bhv{\A_w} \odot \bhvs{\B}_q)\big)(g)\end{array}\] for all $q\in Q$.

Assume the claim holds for all $t \in T_\Gamma$ of depth lower or equal to $n$ for some $n\in \N$.

We consider some $t \in T_\Gamma$ of depth $n+1$. There exist $g\in \Gamma^{(k)}$ and $t_1,\ldots,t_k \in T_\Gamma$ such that $t=g(t_1,\ldots,t_k)$. Clearly, $t_i$ has a depth lower or equal to $n$ for all $1\leq i\leq k$. Let us denote the tuple $(t_1,\ldots,t_k)$ by $\overbar{t}$ and $h^{-1}(t_1) \times \ldots \times h^{-1}(t_k)$ by $h^{-1}(\overbar{t})$. First, we observe \[t'\in h^{-1}(t) \Leftrightarrow t'=[\overbar{p},g,q](\overbar{t'})\] for some $q\in P$, $\overbar{p}\in Q^k$, and $\overbar{t'} \in h^{-1}(\overbar{t})$. Moreover, we have \[\bhvs{\B}_q\big([\overbar{p},g,q'](\overbar{t'})\big)=1 \Leftrightarrow q=q' \wedge \forall 1\leq i \leq k.\big(\bhvs{\B}_{p_i}(t_i')=1\big)\enspace .\] Therefore, \begin{equation}\label{qtn:sumOStuff}\sum_{t'\in h^{-1}(t)} r\odot \bhvs{\B}_q(t') = \sum_{\overbar{p}\in Q^k} \sum_{\overbar{t'}\in h^{-1}(\overbar{t})} r\odot \bhvs{B}_q\big([\overbar{p},g,q](\overbar{t'})\big)\end{equation} holds for all $r \in \ser{S}{T_\Lambda}$.

By this, we get \[\begin{array}{ll} &\bhvs{\A}_q(t)\\ = &\sum\limits_{\overbar{p} \in Q^k} \alpha_g(\overbar{p},q) \prod\limits_{i=1}^k \bhvs{\A}_{p_i}(t_i)\\ \overset{\mathclap{\textup{(IH)}}}{=} &\sum\limits_{\overbar{p} \in Q^k} \alpha_g(\overbar{p},q) \prod\limits_{i=1}^k \big(h^{-1}\circ(\bhv{\A_w} \odot \bhvs{\B}_{p_i})\big)(t_i)\\= &\sum\limits_{\overbar{p} \in Q^k} \alpha_g(\overbar{p},q) \prod\limits_{i=1}^k \big(\sum\limits_{t_i' \in h^{-1}(t_i)}\bhv{\A_w} \odot \bhvs{\B}_{p_i}(t_i')\big)\\ = &\sum\limits_{\overbar{p} \in Q^k} \alpha_g(\overbar{p},q) \sum\limits_{\overbar{t'} \in h^{-1}(\overbar{t})} \prod\limits_{i=1}^k \bhv{\A_w} \odot \bhvs{\B}_{p_i}(t_i')\\= &\sum\limits_{\overbar{p} \in Q^k} \sum\limits_{\overbar{t'} \in h^{-1}(\overbar{t})} \big(\alpha_g(\overbar{p},q) \cdot \prod\limits_{i=1}^k \bhv{\A_w}(t_i')\big) \cdot \big(1\cdot \prod\limits_{i=1}^k\bhvs{\B}_{p_i}(t_i')\big)\\= &\sum\limits_{\overbar{p} \in Q^k} \sum\limits_{\overbar{t'} \in h^{-1}(\overbar{t})} \bhv{\A_w}\big([\overbar{p},g,q](\overbar{t'})\big) \cdot \bhvs{B}_q\big([\overbar{p},g,q](\overbar{t'})\big)\\ \overset{\mathclap{\ref{qtn:sumOStuff}}}{=} & \sum\limits_{t'\in h^{-1}(t)}\bhv{\A_w} \odot \bhvs{\B}_q(t')\\=& h^{-1}\circ\big(\bhv{\A_w}\odot \bhvs{\B}_q\big)(t)\end{array}\]
This finishes the proof of the claim.
\end{proof}
We return to our main proof, it remains to show that we are able to apply the final weights. We consider some $t\in T_\Gamma$. There exists $g\in\Gamma^{(k)}$ and $\overbar{t} \in T_\Gamma^k$ (we use the same notation as above) with $t=g(\overbar{t})$. First, we make a similar observation as in the proof of our claim. Namely, \[\bhv{\B}\big([\overbar{q},g,p](\overbar{t'})\big)=1 \Leftrightarrow p\in Q_{\text{fin}} \wedge \forall{1\leq i\leq k}.\big(\bhvs{\B}_{q_i}(t_i')=1\big)\] for all $\overbar{t'} \in h^{-1}(\overbar{t})$. Thus,\begin{equation}\label{qtn:sumOStuff2}\sum_{t'\in h^{-1}(t)} r\odot\bhv{\B}(t') = \sum_{p\in Q}\sum_{\overbar{q}\in Q^k}\sum_{\overbar{t'}\in h^{-1}(\overbar{t})} r\odot \bhv{\B}\big([\overbar{q},g,p_{\text{fin}}](\overbar{t'})\big)\end{equation} for all $r\in \ser{S}{T_\Lambda}$. Finally, analogously to the proof of Claim \ref{clm:aux2}, we are able to deduce \[\begin{array}{ll}&\bhv{\A}(t)\\=&\sum\limits_{p\in Q} \lambda(p)\cdot{\bhvs{\A}_p(t)}\\=&\sum\limits_{p\in Q}\sum\limits_{\overbar{q}\in Q^k}\lambda(p)\cdot \alpha_g(\overbar{p},q)\cdot\prod\limits_{i=1}^k\bhvs{\A}_{q_i}(t_i)\\ \overset{\mathclap{\text{C. \ref{clm:aux2}}}}{=} &\sum\limits_{p\in Q}\sum\limits_{\overbar{q}\in Q^k}\lambda(p)\cdot \alpha_g(\overbar{p},q)\cdot\prod\limits_{i=1}^k h^{-1}\big(\bhv{\A_w}\odot\bhvs{\B}_{q_i}\big)(t_i)\\= &\sum\limits_{p\in Q}\sum\limits_{\overbar{q}\in Q^k}\lambda(p)\cdot \alpha_g(\overbar{p},q)\cdot\prod\limits_{i=1}^k \sum\limits_{t_i'\in h^{-1}(t_i)}\bhv{\A_w}\odot\bhvs{\B}_{q_i}(t_i')\\= &\sum\limits_{p\in Q}\sum\limits_{\overbar{q}\in Q^k}\sum\limits_{\overbar{t'}\in h^{-1}(\overbar{t})}\big(\lambda(p)\cdot \alpha_g(\overbar{p},q)\cdot\prod\limits_{i=1}^k \bhv{\A_w}(t_i')\big)\cdot \big(1 \cdot\prod\limits_{i=1}^k\bhvs{\B}_{q_i}(t_i')\big)\\= &\sum\limits_{p\in Q}\sum\limits_{\overbar{q}\in Q^k}\sum\limits_{\overbar{t'}\in h^{-1}(\overbar{t})}\big(\omega_{[\overbar{q},g,p_{\text{fin}}]}(q_w,(q_w,\ldots,q_w))\cdot\prod\limits_{i=1}^k \bhv{\A_w}(t_i')\big)\cdot \bhv{\B}\big([\overbar{q},g,p_{\text{fin}}](\overbar{t'})\big)\\=&\sum\limits_{p\in Q}\sum\limits_{\overbar{q}\in Q^k}\sum\limits_{\overbar{t'}\in h^{-1}(\overbar{t})}\bhv{\A_w}\big([\overbar{q},g,p_{\text{fin}}](\overbar{t'})\big)\cdot \bhv{\B}\big([\overbar{q},g,p_{\text{fin}}](\overbar{t'})\big)\\=&\sum\limits_{p\in Q}\sum\limits_{\overbar{q}\in Q^k}\sum\limits_{\overbar{t'}\in h^{-1}(\overbar{t})}\bhv{\A_w}\odot \bhv{\B}\big([\overbar{q},g,p_{\text{fin}}](\overbar{t'})\big)\\\overset{\mathclap{\ref{qtn:sumOStuff2}}}{=}&\sum\limits_{t'\in h^{-1}(t)}\bhv{\A_w}\odot \bhv{\B}(t')\\=&h^{-1}\big(\bhv{\A_w}\odot\bhv{\B}\big)(t)=h^{-1}\big(\bhv{\A_w}\odot\chr{L}\big)(t) \enspace.\end{array}\] Since $t$ was arbitrary, this completes our proof.
\end{proof}

Based on this result and Theorem \ref{thr:WafaIffWfta} a characterization of \wafa via a Nivat-like Theorem is immediate.

\begin{thm}[Nivat-like theorem for \wafa]\label{thr:nivatWAFA} A weighted language $s \in \ser{S}{\Sigma}$ is recognized by a \wafa if and only if there exist a rank $r\in \N$, a ranked alphabet $\Lambda$, a linear tree homomorphism $h:T_\Lambda\rightarrow T_{\Sigma^r_\#}$, a regular tree language $L \subseteq T_\Lambda$, and a \wfta $\A_w$ with exactly one state such that for all $w \in \Sigma^*$, it holds:\[s(w) = h(\bhv{\A_w} \odot \chr{L})(\tw) \enspace.\]
\end{thm}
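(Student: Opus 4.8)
The plan is to read off both implications as almost immediate consequences of the two halves of Theorem~\ref{thr:WafaIffWfta} (i.e.\ Lemmas~\ref{lmm:d1WafaIffWfta} and~\ref{lmm:d2WafaIffWfta}) together with the Nivat-like theorem for \wfta (Theorem~\ref{thr:nivatWFTA}), using the generic tree homomorphism $h^r$ as the bridge between $\Sigma^*$ and $T_{\Sigma^r_\#}$. I would first isolate the observation that, for any fixed $r\geq 1$, a weighted language $s$ satisfies $s(w)=\bhv{\B}(\tw)$ for all $w$ precisely when $s=\bhv{\B}\circ h^r$; this is just unwinding the definition $h^r(w)=t^r_w$.

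For the ``only if'' direction I would start from a \wafa recognizing $s$ and invoke Lemma~\ref{lmm:d1WafaIffWfta} to obtain a rank $r\in\N$ (which we may assume to be $\geq 1$, since the equalized nice \wafa produced there has only non-constant monomials) and a \wfta $\B$ over the ranked alphabet $\Sigma^r_\#$ with $s=\bhv{\B}\circ h^r$. Now apply Theorem~\ref{thr:nivatWFTA} to the weighted tree language $\bhv{\B}\in\ser{S}{T_{\Sigma^r_\#}}$, instantiating its ambient ranked alphabet $\Gamma$ as $\Sigma^r_\#$: this yields a ranked alphabet $\Lambda$, a linear tree homomorphism $h\colon T_\Lambda\to T_{\Sigma^r_\#}$, a regular tree language $L\subseteq T_\Lambda$, and a one-state \wfta $\A_w$ with $\bhv{\B}=h(\bhv{\A_w}\odot\chr{L})$. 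Evaluating at $\tw$ gives $s(w)=\bhv{\B}(\tw)=h(\bhv{\A_w}\odot\chr{L})(\tw)$ for all $w\in\Sigma^*$, which is exactly the asserted form.

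For the ``if'' direction I would take the given data $r$, $\Lambda$, $h\colon T_\Lambda\to T_{\Sigma^r_\#}$ linear, $L\subseteq T_\Lambda$ regular, and the one-state \wfta $\A_w$ with $s(w)=h(\bhv{\A_w}\odot\chr{L})(\tw)$ for all $w$. By the ``if''-direction of Theorem~\ref{thr:nivatWFTA}, the weighted tree language $h(\bhv{\A_w}\odot\chr{L})\in\ser{S}{T_{\Sigma^r_\#}}$ is recognized by some \wfta $\B$, so $s(w)=\bhv{\B}(\tw)=(\bhv{\B}\circ h^r)(w)$ for all $w$. Since $h^r\colon\Sigma^*\to T_{\Sigma^r_\#}$ is a tree homomorphism, Lemma~\ref{lmm:d2WafaIffWfta} applies and shows $\bhv{\B}\circ h^r=s$ is recognized by a \wafa.

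I do not expect a genuine obstacle here; the proof is bookkeeping, and the only points deserving explicit care are: that $\Sigma^r_\#$ is a legitimate instance of the ``ranked alphabet $\Gamma$'' appearing in Theorem~\ref{thr:nivatWFTA}; that $r\geq 1$ so that $\Sigma^r_\#$, $\tw$ and $h^r$ are all well defined; and that the two tree homomorphisms being composed with $\B$ (the linear one from $\Lambda$ to $\Sigma^r_\#$ supplied by the Nivat theorem, and the generic $h^r$ from $\Sigma^*$ to $\Sigma^r_\#$) are kept in the right order and not confused. It is worth remarking, as in Theorems~\ref{NfWFA} and~\ref{thr:nivatWFTA}, that $\A_w$ and $L$ depend only on $s$ and not on the input word $w$.
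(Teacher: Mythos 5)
Your proposal is correct and follows essentially the same route as the paper: both directions combine Lemma~\ref{lmm:d1WafaIffWfta} (resp.\ Lemma~\ref{lmm:d2WafaIffWfta}, which is the relevant half of Theorem~\ref{thr:WafaIffWfta}) with Theorem~\ref{thr:nivatWFTA} applied over the ranked alphabet $\Sigma^r_\#$, exactly as in the paper's proof. Your added remarks (that $r\geq 1$ and that the two homomorphisms must not be confused) are harmless clarifications of points the paper leaves implicit.
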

\begin{proof}
$\Rightarrow$: Let $\A$ be a nice, equalized \wafa such that $\bhv{\A}=s$. Due to Lemma \ref{lmm:d1WafaIffWfta}, $r\in\N$ and a \wfta $\B$ exist such that $s(w)=(\bhv{\B}\circ h^r)(w)=\bhv{\B}(t_w^r)$. Applying Theorem \ref{thr:nivatWFTA} to $\bhv{\B}$ gives us the desired result.

$\Leftarrow$: By Theorem \ref{thr:nivatWFTA}, there exists a \wfta $\B$ such that $\bhv{\B} = h(\bhv{\A_w}\odot \chr{L})$. Let $h^r:\Sigma^*\rightarrow T_{\Sigma_\#^r}$ be the generic homomorphism. In consequence of Theorem \ref{thr:WafaIffWfta}, a \wafa $\A$ exists such that $\bhv{\A}(w)= \bhv{\B}(h^r(w))=h(\bhv{\A_w}\odot\chr{L})(h^r(w))=h(\bhv{\A_w}\odot\chr{L})(t_w^r)$ for all $w\in \Sigma$. This finishes our proof.
\end{proof}

\section{A logical characterization of \wafa}
Based on Theorem \ref{thr:WafaIffWfta} we are able to give a logical characterization of \wafa (Theorem \ref{thr:WafaIffsrMSO}). For this purpose, we will use the logical characterization by weighted MSO logic for trees which was introduced in \cite{DROSTE2006228}. 

Weighted MSO logic over trees is an extension of MSO logic over trees. It allows for the use of usual MSO formulas, but also incorporates quantitative aspects such as semiring elements and operations, as well as weighted quantifiers. In the end, every weighted MSO formula defines a weighted tree language. More precisely, let $\Gamma$ be a ranked alphabet, each weighted MSO formula $\varphi \in \textup{MSO}(\Gamma,S)$ defines a weighted tree language $\bhv{\varphi}:T_\Gamma \rightarrow S$. Weighted MSO logic is strictly more expressive than \wfta. Nevertheless, it is possible to restrict the syntax of weighted MSO in such a way that it characterizes weighted tree languages recognized by \wfta. This fragment is called weighted syntactically restricted MSO (srMSO). Since it is not needed to understand the following proofs, we have omitted the formal definition of srMSO. We will use syntax and semantics of weighted srMSO without any changes and refer the interested reader to \cite{HWAc9} or \cite{droste2011weighted}. Our characterization of \wafa will be fully based on the following characterization theorem for \wfta:

\begin{thmC}[{\cite[Theorem 3.49 (A)]{HWAc9}}]\label{thr:wftaIffMSO} A weighted tree language $s\in \ser{S}{T_\Gamma}$ is recognized by a \wfta if and only if there exist $\varphi \in \textup{srMSO}(\Gamma,S)$ such that $s=\bhv{\varphi}$.
\end{thmC}

However, we still have to handle the homomorphism used in Theorem \ref{thr:WafaIffWfta}. This will be done by choosing an appropriate way of representing words as relational structures.

By definition $\bhv{\varphi}\in \ser{S}{T_\Gamma}$ for all  $\varphi \in \textup{srMSO}(\Gamma,S)$. However, we want to use weighted srMSO on trees to define weighted languages on words. To this end, we define $\bhv{\varphi}_\Sigma\in \ser{S}{T_\Gamma}$ by $\bhv{\varphi}_\Sigma(w)=\bhv{\varphi}({t_w^{\rank(\Gamma)}})$ for all $\varphi \in \textup{srMSO}(\Gamma,S)$, $w\in \Sigma^*$. Since $\textup{srMSO}(\Gamma,S)\subseteq \textup{srMSO}(\Gamma\cup \Sigma_\#^{\rank(\Gamma)},S)$, we can assume without loss of generality $\varphi\in \textup{srMSO}(\Gamma\cup \Sigma_\#^{\rank(\Gamma)},S)$. Hence, $\bhv{\varphi}_\Sigma$ is well defined for all $\Sigma$. It is easy to see that $\bhv{\varphi}_\Sigma= \bhv{\varphi}\circ h$ where $h:\Sigma \rightarrow \Sigma_\#^{\rank(\Gamma)} \cup \Gamma$  is the generic homomorphism.

\begin{thm}\label{thr:WafaIffsrMSO} A weighted language $s\in \ser{S}{\Sigma}$ is recognized by a \wafa if and only if there exist a ranked alphabet $\Gamma$ and $\varphi \in \textup{srMSO}(\Gamma,S)$ such that $s=\bhv{\varphi}_\Sigma$.
\end{thm}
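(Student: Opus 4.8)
The plan is to deduce the statement by chaining two characterizations that are already at our disposal: the automaton-side result Theorem~\ref{thr:WafaIffWfta} (and, more concretely, its two halves Lemma~\ref{lmm:d1WafaIffWfta} and Lemma~\ref{lmm:d2WafaIffWfta}), which expresses a \wafa behaviour as the behaviour of a \wfta precomposed with a generic tree homomorphism $h^r$, and the logic-side result Theorem~\ref{thr:wftaIffMSO}, which identifies \wfta behaviours with weighted srMSO-definable tree languages. The bridge is precisely the identity $\bhv{\varphi}_\Sigma=\bhv{\varphi}\circ h^{\rank(\Gamma)}$ recorded just before the statement: it turns the operator $(-)_\Sigma$ into ``precompose with a generic tree homomorphism'', so the two results can be glued along a common $h^r$.

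For the ``only if'' direction I would start from a \wafa recognizing $s$ and invoke Lemma~\ref{lmm:d1WafaIffWfta} to obtain a rank $r\in\N$ (which we may assume $\geq 1$, since the nice, equalized \wafa behind that construction has only non-constant monomials) and a \wfta $\B$ over $\Sigma_\#^r$ with $s(w)=\bhv{\B}(t_w^r)$ for all $w\in\Sigma^*$. Applying Theorem~\ref{thr:wftaIffMSO} to $\bhv{\B}$ yields $\varphi\in\textup{srMSO}(\Sigma_\#^r,S)$ with $\bhv{\varphi}=\bhv{\B}$. Setting $\Gamma:=\Sigma_\#^r$, one has $\rank(\Gamma)=r$, so the definition of $(-)_\Sigma$ gives $\bhv{\varphi}_\Sigma(w)=\bhv{\varphi}(t_w^r)=\bhv{\B}(t_w^r)=s(w)$ for every $w$, which is what is wanted.

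For the ``if'' direction I would take $\Gamma$ and $\varphi\in\textup{srMSO}(\Gamma,S)$ with $s=\bhv{\varphi}_\Sigma$, read $\varphi$ over the enlarged ranked alphabet $\Gamma\cup\Sigma_\#^{\rank(\Gamma)}$ (which does not change $\bhv{\varphi}$ on trees built only from the original symbols), and apply Theorem~\ref{thr:wftaIffMSO} over that alphabet to produce a \wfta $\B$ with $\bhv{\B}=\bhv{\varphi}$. Writing $r:=\rank(\Gamma)$ and noting that the generic tree homomorphism $h^r:\Sigma^*\to T_{\Sigma_\#^r}\subseteq T_{\Gamma\cup\Sigma_\#^{\rank(\Gamma)}}$ is a tree homomorphism in the sense of Section~2, we obtain $s(w)=\bhv{\varphi}_\Sigma(w)=\bhv{\varphi}(t_w^r)=\bhv{\B}(h^r(w))$, i.e.\ $s=\bhv{\B}\circ h^r$. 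Then Lemma~\ref{lmm:d2WafaIffWfta} (equivalently Theorem~\ref{thr:WafaIffWfta}) yields a \wafa recognizing $\bhv{\B}\circ h^r=s$.

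The argument is essentially bookkeeping, so I do not expect a genuine obstacle; the only point requiring care is keeping the ranked alphabet carrying the srMSO formula and the arity $r$ produced by the equalized \wafa in sync, because $(-)_\Sigma$ always uses $\rank(\Gamma)$ as the branching of $t_w^{(\cdot)}$, whereas Lemma~\ref{lmm:d1WafaIffWfta} hands us a specific $h^r$. Choosing $\Gamma=\Sigma_\#^r$ in the forward direction makes $\rank(\Gamma)=r$ hold on the nose (one only has to remark that $\Sigma_\#^r$ is a legitimate ranked alphabet with $\rank(\Sigma_\#^r)=r$, using $r\geq 1$), and in the backward direction the homomorphism $h^{\rank(\Gamma)}$ is simply whatever $(-)_\Sigma$ dictates, so the matching is automatic.
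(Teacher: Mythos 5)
Your proposal is correct and follows essentially the same route as the paper: both directions chain Lemma~\ref{lmm:d1WafaIffWfta} (resp.\ Lemma~\ref{lmm:d2WafaIffWfta}) with Theorem~\ref{thr:wftaIffMSO}, glued by the identity $\bhv{\varphi}_\Sigma=\bhv{\varphi}\circ h^{\rank(\Gamma)}$, with the paper likewise absorbing the alphabet mismatch by viewing $\varphi$ over $\Gamma\cup\Sigma_\#^{\rank(\Gamma)}$. Your extra bookkeeping remarks (choosing $\Gamma=\Sigma_\#^r$ and noting $r\geq 1$) are harmless refinements of the same argument.
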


\begin{proof} $\Rightarrow$: Assume $s\in \ser{S}{\Sigma}$ is recognized by a \wafa. By \ref{lmm:d1WafaIffWfta}, there exists $r\in \mathbb{N}$ and a \wfta $\mathcal{B}$ such that $s=\bhv{\mathcal{B}}\circ h^r$. By Theorem \ref{thr:wftaIffMSO}, $\varphi \in \textup{srMSO}(\Sigma_\#^r,S)$ exists such that $\bhv{\mathcal{B}}=\bhv{\varphi}$. Thus $s=\bhv{\mathcal{B}}\circ h^r=\bhv{\varphi}\circ h^r=\bhv{\varphi}_\Sigma$.

$\Leftarrow$: If $s=\bhv{\varphi}_\Sigma$ for some $\varphi\in \textup{srMSO}(\Gamma\cup \Sigma_\#^{\rank(\Gamma)},S)$, we get $s = \bhv{\varphi}\circ h^{\rank(\Gamma)}$. By Theorem \ref{thr:wftaIffMSO}, a \wfta $\mathcal{B}$ exists such that $\bhv{\varphi}=\bhv{\mathcal{B}}$. Therefore, $s=\bhv{\mathcal{B}}\circ h^{\rank(\Gamma)}$. Since $\mathcal{B}$ is a \wfta and $h^{\rank(\Gamma)}$ a homomorphism, a \wafa $\mathcal{A}$ with $s=\bhv{A}$ exists by Lemma \ref{lmm:d2WafaIffWfta}.
\end{proof}

While mirroring the branching behavior of \wafa in the logic gives a natural characterization of \wafa, the question arises how \wafa relate to the weighted MSO logic for words. It is well known that \wfta are not capable of characterizing the entirety of weighted MSO logic for words, simply because MSO logic can define series which grow doubly exponential in the size of the input. While \wafa have this ability (cf. Example \ref{xmp:serp}), they still are incapable of capturing the entirety of weighted MSO. It can be shown that the series $r_B$ from the proof of Lemma \ref{lmm:NClHWafa} can be defined in weighted MSO logic for words, while it is not recognized by \wafa. If it is possible to characterize \wafa by a fragment of weighted MSO logic for words remains open.

\section{Closure of \wfta under inverses of homomorphisms}
It is well known (cf. Theorem 1.4.4 in \cite{tata2022}) that regular tree languages are closed under inverses of homomorphisms. Sadly, this is not true in the weighted case, at least not for arbitrary semirings. This raises the question if it is possible to give a precise description of the class of semirings for which recognizable weighted tree languages are closed under inverses of homomorphisms. This question will be answered by Theorem \ref{thr:clOTAUIHom}.

Theorem \ref{thr:nivatWAFA} and Theorem \ref{thr:wftaIffMSO} used Theorem \ref{thr:WafaIffWfta} to apply known results for \wfta to \wafa. Vice versa, we can use Theorem \ref{thr:WafaIffWfta} and Theorem \ref{thr:WAFAiffWFAiffLocFin} to characterize the semirings for which the class of recognizable $S$-weighted tree languages is closed under inverses of homomorphisms.

\begin{thm}\label{thr:clOTAUIHom} The class of $S$-weighted tree languages recognized by $\wfta$ is closed under inverses of homomorphisms if and only if $S$ is locally finite.
\end{thm}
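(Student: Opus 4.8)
The plan is to prove the two implications separately, in each case reducing to machinery already in the paper.

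\emph{Closure $\Rightarrow$ local finiteness.} I would argue through Theorem~\ref{thr:WAFAiffWFAiffLocFin}: it suffices to show that if \wfta are closed under inverses of homomorphisms, then every weighted language recognized by a \wafa is recognized by a \wfa (the reverse inclusion being trivial, since \wfa are a syntactic special case of \wafa). So let $s\in\ser{S}{\Sigma^*}$ be recognized by a \wafa. By Lemma~\ref{lmm:d1WafaIffWfta} there are $r\in\N$ and a \wfta $\B$ over $\Sigma^r_\#$ with $s=\bhv{\B}\circ h^r$. The key point is that $h^r$ may be regarded as a \emph{tree} homomorphism $T_{\Sigma^1_\#}\to T_{\Sigma^r_\#}$, namely the one induced by $h^r(a)=a(x_1,\dots,x_1)$ ($r$ copies of $x_1$) for $a\in\Sigma$ and $h^r(\#)=\#$; an easy induction gives $h^r(t^1_w)=t^r_w$ for all $w\in\Sigma^*$. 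By the closure hypothesis, $(h^r)^{-1}(\bhv{\B})=\bhv{\B}\circ h^r$ is recognized by a \wfta over $\Sigma^1_\#$. Since \wfta over $\Sigma^1_\#$ are exactly \wfa over $\Sigma$, and $T_{\Sigma^1_\#}$ is identified with $\Sigma^*$, this says precisely that $s$ is recognized by a \wfa. Hence \wafa and \wfa are equally expressive over $\Sigma$, and $S$ is locally finite by Theorem~\ref{thr:WAFAiffWFAiffLocFin}.

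\emph{Local finiteness $\Rightarrow$ closure.} Here I would give a direct determinization-style construction. Let $r\in\ser{S}{T_\Gamma}$ be recognized by a \wfta $\A=(Q,\Gamma,\delta,\lambda)$ and let $h\colon T_\Lambda\to T_\Gamma$ be a tree homomorphism. Let $S'$ be the subsemiring of $S$ generated by $1$ together with all values $\delta_g(\overbar{p},q)$ and $\lambda(q)$; as $Q$ and $\Gamma$ are finite, $S'$ is finitely generated, hence finite. The crucial observation is that for any pattern $P\in T_\Gamma[X_k]$ and ground trees $u_1,\dots,u_k$, the vector $\big(\bhvs{\A}_q(P\langle u_1,\dots,u_k\rangle)\big)_{q\in Q}$ depends on $u_1,\dots,u_k$ only through the vectors $\overbar{v}_i=\big(\bhvs{\A}_p(u_i)\big)_{p\in Q}$, via a fixed map obtained by composing the operations of $S$ with the constants $\delta_g(\overbar{p},q)$ --- this is the multi-variable analogue of Claim~\ref{clm:aux1}, and in particular this map sends tuples over $(S')^Q$ into $(S')^Q$. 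Applying it to $P=h(\gamma)$ for each $\gamma\in\Lambda^{(k)}$ yields a function $G_\gamma\colon\big((S')^Q\big)^k\to(S')^Q$ with $\big(\bhvs{\A}_q(h(\gamma(t_1,\dots,t_k)))\big)_q=G_\gamma\big(\dots,(\bhvs{\A}_p(h(t_i)))_p,\dots\big)$, and for $c\in\Lambda^{(0)}$ a constant $\overbar{v}_c=(\bhvs{\A}_p(h(c)))_p\in(S')^Q$. I would then take the \wfta $\B=\big((S')^Q,\Lambda,\delta',\lambda'\big)$ with $\delta'_c(\varepsilon,\overbar{v})=\chr{\{\overbar{v}_c\}}(\overbar{v})$, $\delta'_\gamma(\overbar{v}_1,\dots,\overbar{v}_k,\overbar{v})=\chr{\{G_\gamma(\overbar{v}_1,\dots,\overbar{v}_k)\}}(\overbar{v})$, and $\lambda'(\overbar{v})=\sum_{q\in Q}\lambda(q)\cdot v_q$. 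A routine induction on $t\in T_\Lambda$ shows $\bhvs{\B}_{\overbar{v}}(t)=\chr{\{(\bhvs{\A}_p(h(t)))_p\}}(\overbar{v})$, so $\bhv{\B}(t)=\sum_{q\in Q}\lambda(q)\bhvs{\A}_q(h(t))=\bhv{\A}(h(t))=h^{-1}(r)(t)$; thus $h^{-1}(r)$ is recognized by a \wfta.

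The delicate part is the second direction: one must verify that the state-vector update genuinely is a \emph{function} of the incoming vectors (independent of the particular trees realizing them), which rests on the substitution/decomposition behavior of \wfta captured by the $\delta'$-machinery of Lemma~\ref{lmm:d2WafaIffWfta} in its multi-variable form, and that local finiteness confines all reachable vectors to the finite set $(S')^Q$, so that $\B$ really is a finite-state \wfta. One should note that for $r\ge 2$ the homomorphism $h^r$ used in the first direction is \emph{non-linear}; this is consistent with the fact that linear homomorphisms are always harmless (Lemma~\ref{lmm:TcluioH}) and that local finiteness is needed precisely to handle duplication of subtrees.
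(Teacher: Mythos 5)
Your proposal is correct; the first half coincides with the paper's argument, while the second half takes a genuinely different route. For ``closure $\Rightarrow$ local finiteness'' you argue exactly as the paper does: reduce, via Lemma~\ref{lmm:d1WafaIffWfta} and the identification of \wfta over $\Sigma^1_\#$ with \wfa over $\Sigma$, to showing that every \wafa-recognizable series is \wfa-recognizable, and then invoke Theorem~\ref{thr:WAFAiffWFAiffLocFin}; your explicit verification that $h^r$ is a (non-linear, for $r\geq 2$) tree homomorphism with $h^r(t^1_w)=t^r_w$ merely spells out what the paper leaves implicit. For ``local finiteness $\Rightarrow$ closure'' the paper instead uses the step-function machinery of \cite{DROSTE2006228}: by Lemmas~\ref{lmm:auxfD1} and~\ref{lmm:auxfD2} it writes $r=\sum_{i=1}^k l_i\cdot\chr{L_i}$ for a partition of $T_\Gamma$ into regular tree languages, pulls this back along $h$ using the Boolean closure of regular tree languages under inverse homomorphisms, and concludes that $r\circ h$ is again a recognizable step function. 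You build the recognizing \wfta directly, with state set $(S')^Q$ for the finite subsemiring $S'$ generated by the weights of $\A$, deterministic $0/1$-weighted transitions computing the update maps $G_\gamma$, and root weight $\sum_{q\in Q}\lambda(q)\cdot v_q$. This is sound: the well-definedness of $G_\gamma$ --- the one delicate point, which you rightly flag --- follows from Claim~\ref{clm:aux1} after relabelling all variables of $h(\gamma)$ to $x_1$ (duplicated or deleted variables cause no harm), its coefficients $\delta'_{h(\gamma)}(\overbar{p},q)$ lie in $S'$, and local finiteness confines all reachable behavior vectors to the finite set $(S')^Q$. The trade-off: the paper's proof is short because the heavy lifting (recognizable $=$ step function over locally finite semirings) is cited, and it makes transparent why local finiteness is the right hypothesis; yours is self-contained, needs neither the step-function lemmas nor the Boolean inverse-homomorphism closure, and yields an explicit construction with state bound $|S'|^{|Q|}$, at the cost of the extra bookkeeping around $G_\gamma$ --- in effect you re-prove the relevant direction of Lemma~\ref{lmm:auxfD2} for the series $r\circ h$.
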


To prove this result we will use the notion of \textit{recognizable step functions}: A weighted tree language $r\in \ser{S}{T_\Gamma}$ is a recognizable step function if there exist recognizable tree languages $L_1,\ldots, L_k$ and values $l_1,\ldots, l_k\in S$ such that $r=\sum_{i=1}^k l_i\cdot\chr{L_i}$. Due to \cite{DROSTE2006228}, we know the following about recognizable step functions:

\begin{lemC}[{\cite[Lemma 3.1]{DROSTE2006228}}]\label{lmm:auxfD1} If $r\in \ser{S}{T_\Gamma}$ is a recognizable step function, then a partition $L_1,\ldots, L_k$ of $T_\Gamma$ exits such that $r=\sum_{i=1}^k l_i\cdot\chr{L_i}$ for some $l_1,\ldots,l_k \in S$.
\end{lemC}

Note, this lemma is not redundant since the definition of recognizable step functions does not demand that the recognizable tree languages are pairwise disjoint. Due to Lemma \ref{lmm:auxfD1}, we know that a weighted tree language is a recognizable step function if and only if it has a finite image and each preimage is a recognizable tree language. The next Lemma characterizes recognizable weighted tree languages over locally finite semirings.

\begin{lemC}[{\cite[Lemma 3.3 \& Lemma 6.1]{DROSTE2006228}}]\label{lmm:auxfD2} Let $S$ be locally finite. A weighted tree language $r\in \ser{S}{T_\Gamma}$ is recognizable if and only if $r$ is a recognizable step function.
\end{lemC}

Finally, we can proceed with the proof of Theorem \ref{thr:clOTAUIHom}.

\begin{proof}[Proof of Theorem \ref{thr:clOTAUIHom}]$\Rightarrow$: Assume the class of $S$-weighted tree languages recognized by $\wfta$ is closed under inverses of homomorphisms. 

\begin{clm}\label{clm:hClPauxfD2} The class of $S$-weighted $\Sigma$ languages recognizable by \wafa and the class of $S$-weighted $\Sigma$ languages recognizable by \wfa are equal.
\end{clm}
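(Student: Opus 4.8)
The plan is to obtain the claim as a direct consequence of the characterization of \wafa by \wfta. Note first that the inclusion of \wfa-recognizable languages in the \wafa-recognizable ones is trivial, since every \wfa is, by definition, a \wafa; so only the reverse inclusion requires work, and it is here that the closure hypothesis enters.

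Let $s\in\ser{S}{\Sigma^*}$ be recognized by a \wafa. By Lemma~\ref{lmm:d1WafaIffWfta} there are a rank $r\in\N$ and a \wfta $\B$ over $\Sigma_\#^r$ with $s=\bhv{\B}\circ h^r$. I would then reinterpret the generic homomorphism $h^r$ as a \emph{tree} homomorphism: identifying $\Sigma^*$ with $T_{\Sigma_\#^1}$ as in Section~4, the tree homomorphism $g\colon T_{\Sigma_\#^1}\to T_{\Sigma_\#^r}$ induced by $g(a)=a(x_1,\dots,x_1)$ (with $r$ occurrences of $x_1$) for $a\in\Sigma$ and $g(\#)=\#$ satisfies $g(t_w^1)=t_w^r$ for all $w\in\Sigma^*$. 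Hence $s=\bhv{\B}\circ g=g^{-1}(\bhv{\B})$. Since, by assumption, the class of weighted tree languages recognized by \wfta is closed under inverses of homomorphisms, $g^{-1}(\bhv{\B})$ is recognized by a \wfta over $\Sigma_\#^1$; and because a weighted $\Sigma$-language is recognized by a \wfta over $\Sigma_\#^1$ precisely when it is recognized by a \wfa over $\Sigma$ (recalled in Section~4), $s$ is recognized by a \wfa. This establishes the claim.

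There is essentially no obstacle in the argument itself; the single subtlety worth flagging is that $g$ is \emph{not} linear as soon as $r\ge 2$, since each $g(a)$ duplicates its variable $r$ times. Consequently this step genuinely relies on closure under \emph{arbitrary} inverse tree homomorphisms and cannot be replaced by the linear closure of Lemma~\ref{lmm:TcluioH} --- which is exactly why the hypothesis of Theorem~\ref{thr:clOTAUIHom} is needed. Once the claim is proved, Theorem~\ref{thr:WAFAiffWFAiffLocFin} immediately yields that $S$ is locally finite, finishing the $\Rightarrow$ direction.
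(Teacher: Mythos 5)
Your proposal is correct and follows essentially the same route as the paper: apply Lemma~\ref{lmm:d1WafaIffWfta} (resp.\ Theorem~\ref{thr:WafaIffWfta}) to write $s=\bhv{\B}\circ h^r$ with $h^r$ viewed as a tree homomorphism from $T_{\Sigma_\#^1}$ into $T_{\Sigma_\#^r}$, invoke the assumed closure of \wfta under inverses of homomorphisms to obtain a \wfta over $\Sigma_\#^1$, and then use the correspondence between \wfta over $\Sigma_\#^1$ and \wfa over $\Sigma$. Your closing aside about Lemma~\ref{lmm:TcluioH} is slightly misdirected (that lemma concerns forward images under linear homomorphisms, not inverse images), but this does not affect the argument, which matches the paper's proof.
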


Clearly every \wfa is a \wafa. Thus, for the proof of the claim, it remains to show that every weighted language which is recognized by a \wafa is recognized by a \wfa.
For this purpose, assume $s\in \ser{S}{\Sigma^*}$ is recognized by a \wafa $\A$.
Due to Theorem \ref{thr:WafaIffWfta}, there exists a \wfta $\B$ and a homomorphism $h: \Sigma^* \rightarrow T_\Gamma$ such that $\bhv{\A}=\bhv{\B}\circ h$.
However, by our assumption there exists a \wfta $\mathcal{C}$ over $\Sigma_\#^1$ such that $\bhv{\mathcal{C}}=\bhv{\B}\circ h$ and hence a \wfa $\mathcal{C}'$ over $\Sigma$ such that $\bhv{\A}=\bhv{\B}\circ h=\bhv{\mathcal{C}}=\bhv{\mathcal{C}'}$.
Thereby, $s$ is recognized by a \wfa. This proves our claim.

By Claim \ref{clm:hClPauxfD2} and Theorem \ref{thr:WAFAiffWFAiffLocFin} it follows that $S$ is locally finite.
This finishes the proof of the first direction.

$\Leftarrow$: Assume $S$ is locally finite. Let $r\in \ser{S}{T_\Gamma}$ be recognizable and $h:T_\Lambda \rightarrow T_\Gamma$ a homomorphism. Due to Lemma \ref{lmm:auxfD1} and Lemma \ref{lmm:auxfD2}, we have $r=\sum_{i=1}^k l_i\cdot \chr{L_i}$ for some partition $L_1,\ldots,L_k$ of $T_\Gamma$ and values $l_1,\ldots,l_k \in S$. We claim $r\circ h = \sum_{i=1}^k l_i\cdot \chr{h^{-1}(L_i)}$. To prove this, consider some arbitrary $t\in T_\Lambda$. Since the $L_i$ form a partition of $T_\Gamma$ there exists a unique $j\in\{1,\ldots, k\}$ such that $h(t) \in L_i$. Therefore, we have\[\begin{array}{cl}&(r\circ h)(t)\\=&\sum_{i=1}^k l_i\cdot\chr{L_i}(h(t))=l_j\\=&l_j\cdot \chr{h^{-1}(L_j)}(t)\\\overset{l_j \text{ unique}}{=}&\sum_{i=1}^k l_i\cdot \chr{h^{-1}(L_i)} (t)\enspace.  \end{array}\] Since recognizable tree languages are closed under inverses of homomorphisms, we know that $h^{-1}(L_1), \ldots, h^{-1}(L_k) \subseteq T_{\Lambda}$ are recognizable. Thus, $r\circ h$ is a recognizable step function. Again, by Lemma \ref{lmm:auxfD2}, we get $r\circ h$ is recognizable. This completes our proof.
\end{proof}

\section{\wafa and polynomial automata}
We will use known results for polynomial automata, to prove the decidability of the Zeroness Problem for \wafa if weights are taken from the rationals (Lemma \ref{crl:DecZerEqWafa}).

Informally a polynomial automaton is a set of registers which get updated by polynomial funcions according to some input.
Historically this principle was studied from many perspectives.
For example: \cite{ppr:SeqOfLev123k} studies it in the form of ``polynomial recurrent relations'', or the ``cost register automata'' from \cite{ppr:RegFuncACoRegAut} which allow for a very broad class of updates and weight structures.
We will follow the terminology and definitions of \cite{8005101} where ``polynomial automata'' were considered as a generalization of both vector addition systems and weighted automata. Polynomial automata are quite similar to \wafa, the authors of \cite{8005101} even prove that the characteristic function of the reversal of each language recognized by a non-weighted alternating automaton is recognized by a polynomial automaton of the same size. We want to strengthen this connection. In \cite{8005101} polynomial automata are defined over the rational numbers. However, it is easy to give a more general definition of arbitrary commutative semirings.

A \textit{polynomial automaton} (\pola) is a $5$-tuple $\A=(n,\Sigma, \alpha, p, \gamma)$, where $n\in \N$ is the number of states, $\Sigma$ is an alphabet, $\alpha \in S^n$ is an initial weight vector, $p:\Sigma\rightarrow {\pols{X_n}}^n$ the transition function, and $\gamma\in \pols{X_n}$ an output polynomial. We denote the $i$-th entry of $p(a)$ by $p_i(a)$.

Let $\A=(n,\Sigma, \alpha, p, \gamma)$ be a \pola. Its \textit{state behavior} $\bhvs{\A}:\{1,\ldots, n\}\times \Sigma ^*\rightarrow S$ is the mapping defined by \[\bhvs{\A}(i,w) = \begin{cases}
  \alpha_i &\text{ if } w=\varepsilon,\\
  p_i\big\langle\bhvs{\A}(1,v),\ldots,\bhvs{\A}(n,v)\big\rangle &\text{ if } w=va \text{ for } a\in \Sigma.
\end{cases}\]
Usually we will denote $\bhvs{\A}(i,w)$ by $\bhvs{\A}_i(w)$. Now, the \textit{behavior} of $\A$ is the weighted language $\bhv{\A}:\Sigma^*\rightarrow S$ defined by \[\bhv{\A}(w) = \gamma\big(\bhvs{\A}_1(w),\ldots,\bhvs{\A}_n(w)\big).\]

It is easy to check that this definition is a reformulation of the definition found in \cite{8005101}.

Let the reversal of a weighted language $s\in \ser{S}{\Sigma}$ be defined by $s^R(w)=s(w^R)$ for all $w=w_1\ldots w_n$ with $w_1, \ldots, w_n \in \Sigma$, where $w^R=w_n\ldots w_1$. Comparing the definition of state behavior for \wafa and \pola already yields the following lemma:

\begin{lem}\label{lmm:WafaIffRPola} A weighted language $s\in \ser{S}{\Sigma}$ is recognized by a \wafa if and only if $s^R$ is recognized by a \pola.
\end{lem}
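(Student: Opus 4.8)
The plan is to exhibit an explicit, essentially syntactic correspondence between \wafa and \pola that swaps the initial polynomial with the output polynomial and the final weight function with the initial weight vector, and then to verify by induction on word length that this correspondence reverses the recognized language.

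Concretely, starting from a \wafa $\A = \wafaa$ with $Q = \{q_1, \ldots, q_n\}$, I would define the \pola $\A^R = (n, \Sigma, \alpha, p, \gamma)$ by setting $\alpha_i = \tau(q_i)$ for all $1 \le i \le n$, $\gamma = P_0$, and $p_i(a) = \delta(q_i, a)$ for all $a \in \Sigma$, where I identify the state $q_j$ with the indeterminate $x_j$ via the fixed linear order on $Q$, so that $\pols{Q}$ and $\pols{X_n}$ are literally the same semiring. Under this identification $P_0, \delta(q_i,a)$ are polynomials in $\pols{X_n}$, and $\tau$ becomes a vector in $S^n$, so the assignment $\A \mapsto \A^R$ is a bijection between \wafa with $n$ states and \pola with $n$ states, with an evident inverse.

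The heart of the argument is the claim that $\bhvs{\A}(q_i, w) = \bhvs{\A^R}(i, w^R)$ for all $1 \le i \le n$ and all $w \in \Sigma^*$, proved by induction on $|w|$. The base case $w = \varepsilon$ is immediate since both sides equal $\tau(q_i) = \alpha_i$. For the step, write $w = av$ with $a \in \Sigma$; then $w^R = v^R a$, so the recursive clause of the \pola semantics unfolds $\bhvs{\A^R}(i, v^R a)$ as $p_i(a)\big\langle \bhvs{\A^R}(1, v^R), \ldots, \bhvs{\A^R}(n, v^R)\big\rangle$, and applying the induction hypothesis componentwise turns this into $\delta(q_i, a)\big\langle \bhvs{\A}(q_1, v), \ldots, \bhvs{\A}(q_n, v)\big\rangle = \bhvs{\A}(q_i, av)$. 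Composing once more with $P_0 = \gamma$ at the top yields $\bhv{\A^R}(w^R) = \bhv{\A}(w)$ for every $w$, i.e. $\bhv{\A^R} = (\bhv{\A})^R$. This gives both implications at once: if $s$ is recognized by the \wafa $\A$, then $s^R = \bhv{\A^R}$ is recognized by the \pola $\A^R$; and if $s^R$ is recognized by a \pola $\B$, then, writing $\B = \A^R$ for the \wafa $\A$ corresponding to $\B$ under the inverse assignment, we obtain $\bhv{\A} = (\bhv{\B})^R = (s^R)^R = s$.

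I do not expect a genuine obstacle here; the only point requiring care is bookkeeping the reversal, since \wafa consume a word from the front (peeling off the first letter) while \pola consume it from the back (peeling off the last letter), so the induction must be set up with $w^R$ on the \pola side. One should also make sure the substitution $\langle\cdot\rangle$ in $\pols{Q}$ is really the same operation as the substitution in $\pols{X_n}$ under the identification $q_j \leftrightarrow x_j$, which it is by definition of the polynomial semiring. Once this is fixed, the verification is mechanical.
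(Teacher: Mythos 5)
Your proposal is correct and matches the paper's proof: the paper likewise constructs the \pola by taking the transition polynomials verbatim, using $\tau$ as the initial vector and $P_0$ as the output polynomial, and then argues by a straightforward induction on $|w|$ that $\bhv{\A}(w)=\bhv{\A^R}(w^R)$, handling the converse symmetrically. Your write-up simply spells out the induction and the reversal bookkeeping that the paper leaves implicit.
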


\begin{proof} Assume $s$ is recognized by $\A=\wafaa$. 

  Let $\B=(|Q|,\Sigma, \big(\tau(q_1),\ldots,\tau(q_n)\big), p, P_0)$ be a \pola with $p_i(a)=\delta(q_i,a)$ for all $1\leq i\leq |Q|, a\in \Sigma$. Then, a straightforward induction on $|w|$ shows $\bhv{A}(w)=\bhv{\B}(w^R)$ for all $w\in \Sigma$.
The second direction is proven analogously to the first one.
\end{proof}

Let $\A,\A'$ be two \wafa. We observe $\bhv{\A}(w) = 0$ for all $w\in \Sigma^*$ if and only if $\bhv{\A}(w^R)=0$ for all $w \in \Sigma^*$. Moreover, we have $\bhv{\A}(w) = \bhv{\A'}(w)$ for all $w\in \Sigma^*$ if and only if $\bhv{\A}(w^R)=\bhv{\A'}(w^R)$ for all $w \in \Sigma^*$. This allows us to derive the following corollary from Lemma \ref{lmm:WafaIffRPola}:

\begin{cor}\label{crl:DecZerEqWafa} The Zeroness Problem and the Equivalence Problem for \wafa with weights taken from the rationals are in the complexity class ACKERMANN and hard for the complexity class ACKERMANN.
\end{cor}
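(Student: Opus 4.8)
The plan is to reduce both problems, in both directions, to the Zeroness Problem for polynomial automata over $\mathbb{Q}$, for which the matching ACKERMANN bounds are supplied by \cite{8005101}; since ACKERMANN is closed under polynomial-time (indeed primitive recursive) reductions, everything then transfers. All four reductions I use are constructions of size polynomial in the input, built from Lemma \ref{lmm:WafaIffRPola} together with a couple of elementary closure observations.

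\emph{Membership in ACKERMANN.} First I would handle Zeroness. Given a \wafa $\A=\wafaa$ over $\mathbb{Q}$, Lemma \ref{lmm:WafaIffRPola} produces a \pola $\B$ with $\bhv{\B}=\bhv{\A}^R$; since a weighted language vanishes identically iff its reversal does, $\bhv{\A}\equiv 0$ iff $\bhv{\B}\equiv 0$, so WAFA-Zeroness reduces to PA-Zeroness and hence lies in ACKERMANN. For the Equivalence Problem I would first turn it into a Zeroness instance. Given \wafa $\A=(Q,\Sigma,\delta,P_0,\tau)$ and $\A'=(Q',\Sigma,\delta',P_0',\tau')$ with $Q\cap Q'=\emptyset$, consider $\A''=(Q\cup Q',\Sigma,\delta\cup\delta',P_0-P_0',\tau\cup\tau')$, where $\delta\cup\delta'$ is $\delta$ on $Q$ and $\delta'$ on $Q'$ (both ranges embed into $\pols{Q\cup Q'}$), similarly for $\tau\cup\tau'$, and $P_0-P_0'$ denotes $P_0$ plus the polynomial obtained from $P_0'$ by negating every coefficient. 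Because substitution is additive in the coefficients of the substituted-into polynomial, $\bhv{\A''}=\bhv{\A}-\bhv{\A'}$, so $\bhv{\A}=\bhv{\A'}$ iff $\bhv{\A''}\equiv 0$; this step uses only that $\mathbb{Q}$ has additive inverses. Composing with the previous reduction, Equivalence also lands in ACKERMANN.

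\emph{ACKERMANN-hardness.} For the lower bounds I would go the other way. Take a \pola $\B$ over $\mathbb{Q}$; the second direction of Lemma \ref{lmm:WafaIffRPola} yields a \wafa $\A$ with $\bhv{\A}=\bhv{\B}^R$, again in polynomial time, and $\bhv{\B}\equiv 0$ iff $\bhv{\A}\equiv 0$. Hence PA-Zeroness reduces to WAFA-Zeroness, which is therefore ACKERMANN-hard by the hardness result for \pola over $\mathbb{Q}$ in \cite{8005101}. To get hardness of the Equivalence Problem it suffices to note that WAFA-Zeroness is the special case of WAFA-Equivalence in which the second automaton is a fixed \wafa recognizing the constant $0$ (e.g. with $P_0=0$), so this trivial reduction carries ACKERMANN-hardness over to Equivalence as well.

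\emph{Main obstacle.} Essentially all of the genuine difficulty is imported: the nontrivial content is the ACKERMANN characterization of PA-Zeroness from \cite{8005101}. On our side the points requiring care are (a) that the generalized definition of \pola given above coincides with the one in \cite{8005101} when $S=\mathbb{Q}$, so that the cited upper and lower bounds apply verbatim, (b) that all four reductions are size-preserving up to a polynomial, so that both membership in and hardness for ACKERMANN are genuinely transferred, and (c) the Equivalence-to-Zeroness step, which is the only place a semiring-specific property — the existence of additive inverses — is invoked, and is precisely what keeps the equivalence result tied to the rationals rather than arbitrary commutative semirings.
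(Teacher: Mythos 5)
Your proposal is correct and follows essentially the same route as the paper: transfer between \wafa and \pola via Lemma \ref{lmm:WafaIffRPola}, using that zeroness and equivalence are invariant under reversal, and import the ACKERMANN upper and lower bounds for polynomial automata over $\mathbb{Q}$ from \cite{8005101}. The only difference is cosmetic: you carry out the equivalence-to-zeroness difference construction explicitly on the \wafa side (valid, since $\mathbb{Q}$ has additive inverses and substitution is additive in the coefficients of the initial polynomial), whereas the paper delegates this step to the cited results (Theorem 1, Theorem 4, Corollary 1 of \cite{8005101}).
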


\begin{proof} Using the respective results for polynomial automata (Theorem 1, Theorem 4, and Corollary 1 in \cite{8005101}) together with Lemma \ref{lmm:WafaIffRPola} yields this result.
\end{proof}

In general, weighted languages recognized by \wafa are not closed under reversal (Theorem 8.4 in \cite{KOSTOLANYI20181}).
Thus, the class of weighted languages recognized by \wafa and the class of languages recognized by \pola differ.
Moreover, series recognized by polynomial automata are (in general) not close under reversal.
However, if weights are taken from a commutative semiring, series recognized by \wfa are closed under reversal, this would allow for a direct translation of Theorem \ref{thr:WAFAiffWFAiffLocFin} into the setting of \pola.

\section{Conclusion}
We were able to connect WAFA to a variety of formalisms, giving a better understanding of their expressive power and characterizing the class of quantitative languages recognized by WAFA. From here, there are various routes to take. It could be of great practical use to find a logical characterization of WAFA via a linear formalism such as a weighted linear logic, or weighted rational expressions tailored to the expressive power of WAFA, or a fitting fragment of weighted MSO logic for words.

Similar to the work in \cite{8005101}, one could investigate subclasses of WAFA allowing for more efficient decision procedures. An interesting candidate is strictly alternating \wafa who have a bounded number of alternations within any run. While these use the ability to recognize very fast growing series, they still add in expressive power due to the ability to multiply over subruns.

A different direction would aim to use the expressive power added by alternation to achieve an automata model which is as expressive weighted MSO for words. While \wafa can not fulfill this role a generalization of \wafa might be. After all, the use of product and sum within runs is very similar, to the use of product- and sum-quantifiers in weighted MSO for words.

Alternatively, one could approach the concept of alternation in weighted automata dealing with more complex structures than words, such as weighted alternating tree automata. And of course, having the universal interpretation of nondeterminism in mind, one may take several of these routes at once!
\nocite{*}
\bibliographystyle{alphaurl} 
\bibliography{myBib}
\end{document}